\title{Phocas: dimensional Byzantine-resilient stochastic gradient descent}
\author{
  Cong Xie \\
  Department of Computer Science\\
  University of Illinois at Urbana Champaign\\
  \texttt{cx2@illinois.edu} \\
  \And
  Oluwasanmi Koyejo \\
  Department of Computer Science\\
  University of Illinois at Urbana Champaign\\
  \texttt{cx2@illinois.edu} \\
  \And
  Indranil Gupta \\
  Department of Computer Science\\
  University of Illinois at Urbana Champaign\\
  \texttt{cx2@illinois.edu} \\
}
\begin{document}

\def\Blue{\color{blue}}
\def\Purple{\color{purple}}

\def\A{{\bf A}}
\def\a{{\bf a}}
\def\B{{\bf B}}
\def\C{{\bf C}}
\def\c{{\bf c}}
\def\D{{\bf D}}
\def\d{{\bf d}}
\def\F{{\bf F}}
\def\e{{\bf e}}
\def\f{{\bf f}}
\def\G{{\bf G}}
\def\H{{\bf H}}
\def\I{{\bf I}}
\def\K{{\bf K}}
\def\L{{\bf L}}
\def\M{{\bf M}}
\def\m{{\bf m}}
\def\N{{\bf N}}
\def\n{{\bf n}}
\def\Q{{\bf Q}}
\def\q{{\bf q}}
\def\S{{\bf S}}
\def\s{{\bf s}}
\def\T{{\bf T}}
\def\U{{\bf U}}
\def\u{{\bf u}}
\def\V{{\bf V}}
\def\v{{\bf v}}
\def\W{{\bf W}}
\def\w{{\bf w}}
\def\X{{\bf X}}
\def\x{{\bf x}}
\def\Y{{\bf Y}}
\def\y{{\bf y}}
\def\Z{{\bf Z}}
\def\z{{\bf z}}
\def\0{{\bf 0}}
\def\1{{\bf 1}}

\def\AM{{\mathcal A}}
\def\CM{{\mathcal C}}
\def\DM{{\mathcal D}}
\def\GM{{\mathcal G}}
\def\FM{{\mathcal F}}
\def\IM{{\mathcal I}}
\def\NM{{\mathcal N}}
\def\OM{{\mathcal O}}
\def\SM{{\mathcal S}}
\def\TM{{\mathcal T}}
\def\UM{{\mathcal U}}
\def\XM{{\mathcal X}}
\def\YM{{\mathcal Y}}
\def\RB{{\mathbb R}}

\def\TX{\tilde{\bf X}}
\def\tx{\tilde{\bf x}}
\def\ty{\tilde{\bf y}}
\def\TZ{\tilde{\bf Z}}
\def\tz{\tilde{\bf z}}
\def\hd{\hat{d}}
\def\HD{\hat{\bf D}}
\def\hx{\hat{\bf x}}
\def\TD{\tilde{\Delta}}

\def\alp{\mbox{\boldmath$\alpha$\unboldmath}}
\def\bet{\mbox{\boldmath$\beta$\unboldmath}}
\def\epsi{\mbox{\boldmath$\epsilon$\unboldmath}}
\def\etab{\mbox{\boldmath$\eta$\unboldmath}}
\def\ph{\mbox{\boldmath$\phi$\unboldmath}}
\def\pii{\mbox{\boldmath$\pi$\unboldmath}}
\def\Ph{\mbox{\boldmath$\Phi$\unboldmath}}
\def\Ps{\mbox{\boldmath$\Psi$\unboldmath}}
\def\tha{\mbox{\boldmath$\theta$\unboldmath}}
\def\Tha{\mbox{\boldmath$\Theta$\unboldmath}}
\def\muu{\mbox{\boldmath$\mu$\unboldmath}}
\def\Si{\mbox{\boldmath$\Sigma$\unboldmath}}
\def\si{\mbox{\boldmath$\sigma$\unboldmath}}
\def\Gam{\mbox{\boldmath$\Gamma$\unboldmath}}
\def\Lam{\mbox{\boldmath$\Lambda$\unboldmath}}
\def\De{\mbox{\boldmath$\Delta$\unboldmath}}
\def\Ome{\mbox{\boldmath$\Omega$\unboldmath}}
\def\TOme{\mbox{\boldmath$\hat{\Omega}$\unboldmath}}
\def\vps{\mbox{\boldmath$\varepsilon$\unboldmath}}
\newcommand{\ti}[1]{\tilde{#1}}
\def\Ncal{\mathcal{N}}
\def\argmax{\mathop{\rm argmax}}
\def\argmin{\mathop{\rm argmin}}
\providecommand{\abs}[1]{\lvert#1\rvert}
\providecommand{\norm}[2]{\lVert#1\rVert_{#2}}

\def\Zs{{\Z_{\mathrm{S}}}}
\def\Zl{{\Z_{\mathrm{L}}}}
\def\Yr{{\Y_{\mathrm{R}}}}
\def\Yg{{\Y_{\mathrm{G}}}}
\def\Yb{{\Y_{\mathrm{B}}}}
\def\Ar{{\A_{\mathrm{R}}}}
\def\Ag{{\A_{\mathrm{G}}}}
\def\Ab{{\A_{\mathrm{B}}}}
\def\As{{\A_{\mathrm{S}}}}
\def\Asr{{\A_{\mathrm{S}_{\mathrm{R}}}}}
\def\Asg{{\A_{\mathrm{S}_{\mathrm{G}}}}}
\def\Asb{{\A_{\mathrm{S}_{\mathrm{B}}}}}
\def\Or{{\Ome_{\mathrm{R}}}}
\def\Og{{\Ome_{\mathrm{G}}}}
\def\Ob{{\Ome_{\mathrm{B}}}}

\def\vec{\mathrm{vec}}
\def\fold{\mathrm{fold}}
\def\index{\mathrm{index}}
\def\sgn{\mathrm{sgn}}
\def\tr{\mathrm{tr}}
\def\rk{\mathrm{rank}}
\def\diag{\mathsf{diag}}
\def\const{\mathrm{Const}}
\def\dg{\mathsf{dg}}
\def\st{\mathsf{s.t.}}
\def\vect{\mathsf{vec}}
\def\MCAR{\mathrm{MCAR}}
\def\MSAR{\mathrm{MSAR}}
\def\etal{{\em et al.\/}\,}
\newcommand{\indep}{{\;\bot\!\!\!\!\!\!\bot\;}}

\def\Lsize{\hbox{\space \raise-2mm\hbox{$\textstyle \L \atop \scriptstyle {m\times 3n}$} \space}}
\def\Ssize{\hbox{\space \raise-2mm\hbox{$\textstyle \S \atop \scriptstyle {m\times 3n}$} \space}}
\def\Osize{\hbox{\space \raise-2mm\hbox{$\textstyle \Ome \atop \scriptstyle {m\times 3n}$} \space}}
\def\Tsize{\hbox{\space \raise-2mm\hbox{$\textstyle \T \atop \scriptstyle {3n\times n}$} \space}}
\def\Bsize{\hbox{\space \raise-2mm\hbox{$\textstyle \B \atop \scriptstyle {m\times n}$} \space}}

\newcommand{\twopartdef}[4]
{
	\left\{
		\begin{array}{ll}
			#1 & \mbox{if } #2 \\
			#3 & \mbox{if } #4
		\end{array}
	\right.
}

\newcommand{\tabincell}[2]{\begin{tabular}{@{}#1@{}}#2\end{tabular}}

\newtheorem{theorem}{Theorem}
\newtheorem{lemma}{Lemma}
\newtheorem{proposition}{Proposition}
\newtheorem{corollary}{Corollary}
\newtheorem{definition}{Definition}
\newtheorem{remark}{Remark}

\def\E{{\mathbb E}}
\def\R{{\mathbb R}}

\DeclarePairedDelimiter\ceil{\lceil}{\rceil}
\DeclarePairedDelimiter\floor{\lfloor}{\rfloor}

\newcommand{\ip}[2]{\left\langle #1, #2 \right \rangle}

\newcommand{\comment}[1]{\text{\quad$\triangleright$ \textit{#1}}}

\def\aggr{{\tt Aggr}}
\def\krum{{\tt Krum}}
\def\mkrum{{\tt Multi-Krum}}
\def\mean{{\tt mean}}
\def\trmean{{\tt Trmean}}
\def\phocas{{\tt Phocas}}

\maketitle

\begin{abstract}
We propose a novel robust aggregation rule for distributed synchronous Stochastic Gradient Descent~(SGD) under a general Byzantine failure model. The attackers can arbitrarily manipulate the data transferred between the servers and the workers in the parameter server~(PS) architecture. We prove the Byzantine resilience of the proposed aggregation rules. Empirical analysis shows that the proposed techniques outperform current approaches for realistic use cases and Byzantine attack scenarios.
\end{abstract}

\section{Introduction}

The failure resilience of distributed machine-learning systems has attracted increasing attention~\citep{blanchard2017machine,chen2017distributed,yin2018byzantine,alistarh2018byzantine} in the community. 
Larger clusters can accelerate training. However, this makes the distributed system more vulnerable to different kinds of failures or even attacks~\citep{harinath2017review}. 
Thus, failure/attack resilience is becoming more and more important for distributed machine-learning systems, especially for large-scale deep learning~\citep{Dean2012LargeSD,McMahan2017CommunicationEfficientLO}.  

In this paper, we consider the most general failure model, Byzantine failures~\citep{Lamport1982TheBG}, where the attackers can know any information of the other processes, and attack any value in transmission. To be more specific,  the data transmission between the machines can be replaced by arbitrary values. Under such model, there are no constraints on the failures or attackers. 


The distributed training framework studied in this paper is the \underline{P}arameter \underline{S}erver~(PS). The PS architecture is composed of the server nodes and the worker nodes. The server nodes maintain a global copy of the model, aggregate the gradients from the workers, apply the gradients to the model, and broadcast the latest model to the workers. The worker nodes pull the latest model from the server nodes, compute the gradients according to the local portion of the training data, and send the gradients to the server nodes. The entire dataset and the corresponding workload is distributed to multiple worker nodes, thus parallelizing the computation via partitioning the dataset. 

In this paper, we study the Byzantine resilience of synchronous Stochastic Gradient Descent~(SGD), which is a popular class of learning algorithms using PS architecture. Its variants are widely used in training deep neural networks~\cite{Kingma2014AdamAM,Mukkamala2017VariantsOR}. Such algorithms always wait to collect gradients from all the worker nodes before moving on to the next iteration. 

\begin{figure}[htb!]
\centering
\subfigure[Classic Byzantine]{\includegraphics[width=0.27\textwidth, height=0.09\textheight]{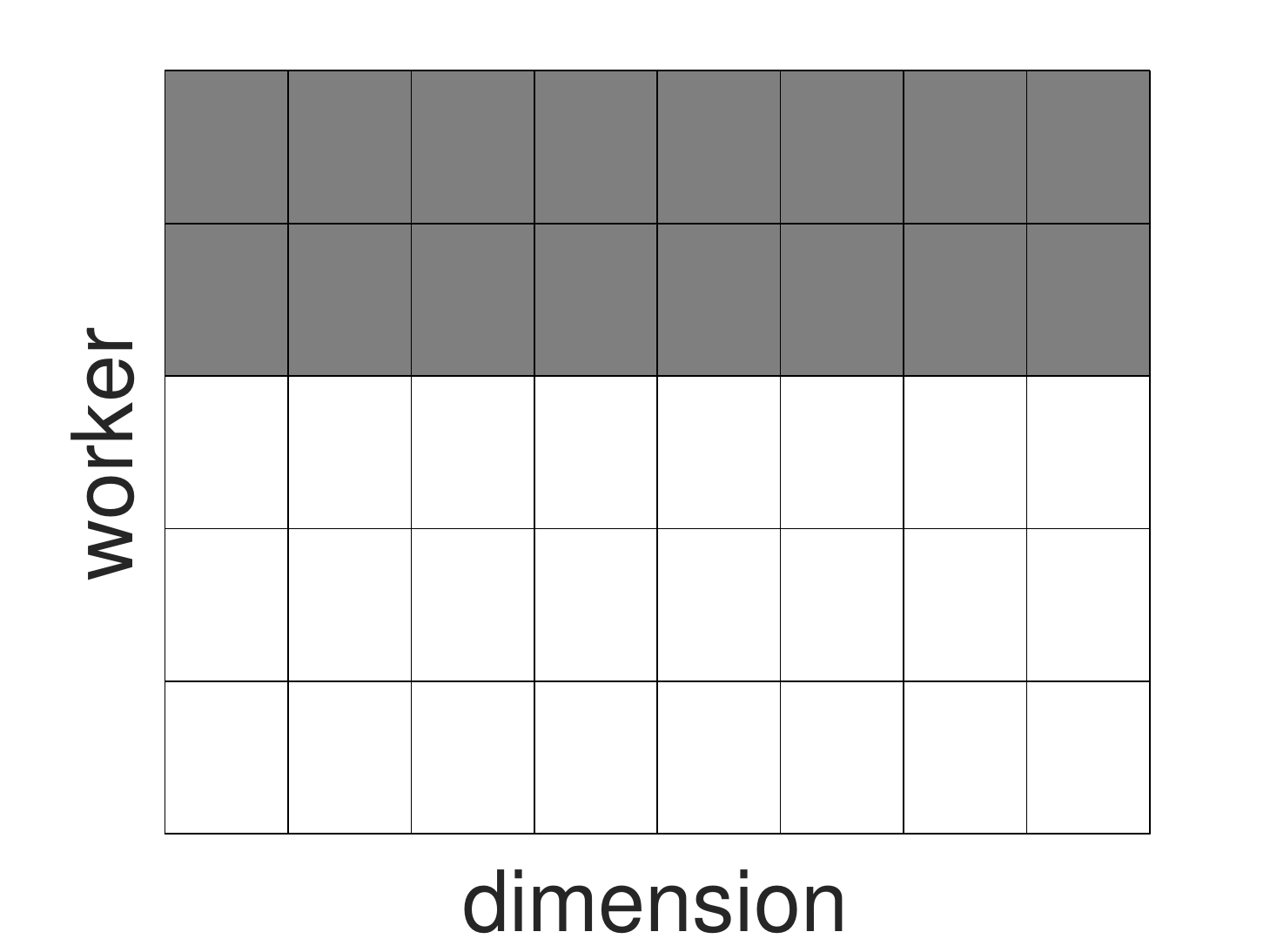}}
\subfigure[Generalized Byzantine]{\includegraphics[width=0.27\textwidth, height=0.09\textheight]{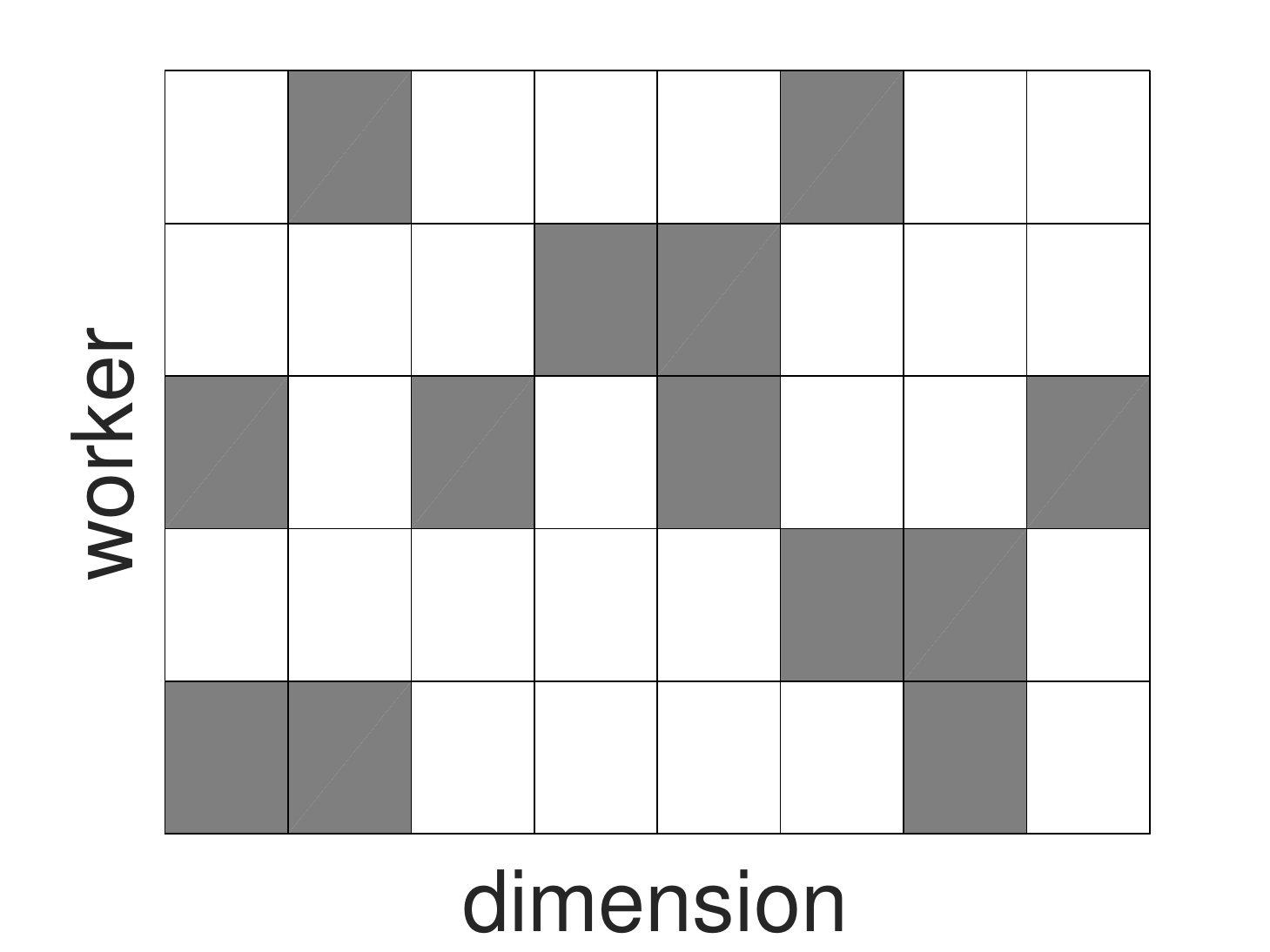}}
\caption{The 2 figures visualize $5$ workers with $8$-dimensional gradients. The $i$th row represents the gradient vector produced by the $i$th worker. The $j$th column represents the $j$th dimension of the gradients. A shadow block represents that the corresponding value is replaced by a Byzantine value. In the two examples, the maximal number of Byzantine values for each dimension is $2$. For the classic Byzantine model, all the Byzantine values must lie in the same workers~(rows), while for the generalized Byzantine model there is no such constraint. Thus, (a) is a special case of (b).}
\label{fig:viz_byz}
\end{figure}

The failure model can be described by using an $m \times d$ matrix consisting of the $d$-dimensional gradients produced by $m$ workers, as visualized in Figure~\ref{fig:viz_byz}. Previous work~\cite{blanchard2017machine} has so far addressed a special case, where the Byzantine values must lie in the same rows~(workers) as shown in Figure~\ref{fig:viz_byz}(a). Our failure model generalizes the classic Byzantine failure model by placing the Byzantine values anywhere in the matrix without any constraint. For example, \citet{lee2017risk} describes the vulnerability to bit-flipping attacks of a wireless transmission technology. The servers could receive data via such vulnerable communication media, even if the messages are encrypted. As a result, an arbitrary fraction of the received values are Byzantine.

There are two limitations lying in most of the existing Byzantine-resilient SGD algorithms~\cite{blanchard2017machine,chen2017distributed}. First, they only consider the classic Byzantine model shown in Figure~\ref{fig:viz_byz}(a). However, the Byzantine failures can also happen in the communication media/interfaces on the server side, which yields the generalized Byzantine model shown in Figure~\ref{fig:viz_byz}(b). Second, the algorithms are based on the Euclidean norm, which suffers from the curse of dimensionality. When the dimension gets higher, it will be more difficult to distinguish the Byzantine gradients from the correct ones.

In this paper, we study the dimensional Byzantine-resilient algorithms, which tolerate the generalized Byzantine model under certain conditions, and are not affected by the curse of dimensionality. 
We propose Byzantine-resilient trimmed-mean-based aggregation rules. We assume that for each dimension, the number of Byzantine values must be less than the number of correct ones. The resilience to such Byzantine model is called ``dimensional Byzantine resilience".
The main contributions of this paper are listed below: 
\setitemize[0]{leftmargin=*}
\begin{itemize} 
\item We formulate the dimensional Byzantine resilience property, and prove that the proposed trimmed-mean-based approaches are dimensional Byzantine-resilient~(Definition~\ref{def:dim_byz}). As far as we know, this paper is the first one to study generalized Byzantine failures and dimensional Byzantine resilience for synchronous SGD.
\item We show that the proposed aggregation rules have low computation cost. The time complexities are nearly linear, which are of the same order as averaging--the default choice for non-Byzantine aggregation.
\end{itemize}

\section{Model}

We consider the following optimization problem:
\begin{align*}
\min_{x} F(x), 
\end{align*}
where $F(x) = \E_{z \sim \mathcal{D}}[f(x; z)]$, $z$ is sampled from some unknown distribution $\mathcal{D}$. We assume that there exists a minimizer of $F(x)$, which is denoted by $x^*$.

We solve this problem in a distributed manner with $m$ workers. In each iteration, each worker will sample $n$ independent and identically distributed~(i.i.d.) data points from the distribution $\mathcal{D}$, and compute the gradient of the local empirical loss $F_i(x) = \frac{1}{n} \sum_{j=1}^n f(x; z^{i,j}), \forall i \in [m]$, where $z^{i,j}$ is the $j$th sampled data on the $i$th worker. The servers will collect and aggregate the gradients sent by the workers, and update the model as follows:
\begin{align*}
x^{t+1} = x^t - \gamma^t Aggr(\{\tilde{v}_i^t: i \in [m]\}),
\end{align*}
where $Aggr(\cdot)$ is an aggregation rule (e.g., averaging), and $\{\tilde{v}_i^t: i \in [m]\}$ is the set of gradient estimators received by the servers in the $t^{\mbox{th}}$ iteration. Under Byzantine failures/attacks, $\{v_i^t = \nabla F_i(x^t): i \in [m]\}$ is partially replaced by arbitrary values, which yields $\{\tilde{v}_i^t: i \in [m]\}$.

%

\section{Byzantine resilience}

In this section, we formally define the classic Byzantine resilience property and its generalized version: dimensional Byzantine resilience.

Suppose that in a specific iteration, the correct vectors $\{v_i: i \in [m]\}$ are i.i.d samples drawn from the random variable $G = \sum_{j=1}^n \nabla f(x; z^j)$, where $\E[G] = g$ is an unbiased estimator of the gradient based on the current parameter $x$. Thus, $\E[v_i] = \E[G] = g$, for any $i \in [m]$. We simplify the notations by ignoring the index of iteration $t$.

We first introduce the classic Byzantine model, which is reformulated from the model proposed by \citet{blanchard2017machine}.
With the Byzantine workers, the vectors $\{\tilde{v}_i: i \in [m]\}$ which are actually received by the server nodes are as follows: 
\begin{definition}[Classic Byzantine Model]
\begin{align}
\tilde{v}_i = 
\begin{cases}
v_i, &\mbox{if the $i$th worker is correct,}\\
arbitrary, &\mbox{if the $i$th worker is Byzantine}.
\end{cases}
\label{equ:byz_worker}
\end{align}
\end{definition}
Note that the indices of Byzantine workers can change across different iterations. Furthermore, the server nodes are not aware of which workers are Byzantine. The only information given is the number of Byzantine workers, if necessary.

We then introduce the classic Byzantine resilience.
\begin{definition} 
\label{def:byz}
(Classic $\Delta$-Byzantine Resilience). Assume that $0 \leq q \leq m$. Let $\{v_i: i \in [m]\}$ be any i.i.d. random vectors in $\R^d$, $v_i \sim G$, with $\E[G] = g$. Let $\{\tilde{v}_i: i \in [m]\}$ be the set of vectors, of which up to $q$ of them are replaced by arbitrary vectors in $\R^d$, while the others still equal to the corresponding $\{v_i\}$. Aggregation rule $\aggr(\cdot)$ is said to be classic $\Delta$-Byzantine resilient if 
$
\E \|\aggr(\{\tilde{v}_i: i \in [m]\}) - g \|^2 \leq \Delta,
$
where $\Delta$ is a constant dependent on $m$ and $q$.
\end{definition}

The baseline algorithm \texttt{Krum} is defined as follows. 
\begin{definition} 
\label{def:krum}
Krum chooses the vector with the minimal local sum of distances: 
$
\krum (\{\tilde{v}_i: i \in [m]\}) = \tilde{v}_k, \quad
k = \argmin_{i \in [m]} \sum_{i \rightarrow j} \| \tilde{v}_i - \tilde{v}_j \|^2,
$
where $i \rightarrow j$ is the indices of the $m-q-2$ nearest neighbours of $\tilde{v}_i$ in $\{\tilde{v}_i: i \in [m]\}$ measured by Euclidean distance.
\end{definition}
The \texttt{Krum} aggregation is classic $\Delta$-Byzantine resilient under certain assumptions. The proof is given by Proposition 1 of \citet{blanchard2017machine}. 
\begin{lemma}[\citet{blanchard2017machine}]
Let $v_1, \ldots, v_m$ be any i.i.d. random $d$-dimensional vectors s.t. $v_i \sim G$, with $\E[G] = g$ and $\E\|G-g\|^2 \leq V$. $q$ of $\{\tilde{v}_i: i \in [m]\}$ are Byzantine. If $2q + 2 < m$, we have 
$
\E \|\krum(\{\tilde{v}_i: i \in [m]\}) - g \|^2 \leq \Delta_0,
$
where 
$
\Delta_0 = \left(6m-6q + \frac{4q(m-q-2) + 4q^2(m-q-1)}{m-2q-2} \right)V.
$
\end{lemma}

The generalized Byzantine model is denoted as:
\begin{definition}[Generalized Byzantine Model]
\begin{align}
(\tilde{v}_i)_j = 
\begin{cases}
(v_i)_j, &\mbox{if the the $j$th dimension of $v_i$ is correct,}\\
arbitrary, &\mbox{otherwise},
\end{cases}
\label{equ:byz_model}
\end{align}
where $(v_i)_j$ is the $j$th dimension of the vector $v_i$.
\end{definition}

Based on the Byzantine model above, we introduce a generalized Byzantine resilience property, dimensional $\Delta$-Byzantine resilience, which is defined as follows:
\begin{definition} 
\label{def:dim_byz}
(Dimensional $\Delta$-Byzantine Resilience). Assume that $0 \leq q \leq m$.  Let $\{v_i: i \in [m]\}$ be any i.i.d. random vectors in $\R^d$, $v_i \sim G$, with $\E[G] = g$. Let $\{\tilde{v}_i: i \in [m]\}$ be the set of candidate vectors. For each dimension, up to $q$ of the $m$ values are replaced by arbitrary values, i.e., for dimension $j \in [d]$, $q$ of $\{(\tilde{v}_i)_j: i \in [m]\}$ are Byzantine, where $(\tilde{v}_i)_j$ is the $j$th dimension of the vector $\tilde{v}_i$. Aggregation rule $\aggr(\cdot)$ is said to be dimensional $\Delta$-Byzantine resilient if 
$
\E \|\aggr(\{\tilde{v}_i: i \in [m]\}) - g \|^2 \leq \Delta,
$
where $\Delta$ is a constant dependent on $m$ and $q$.
\end{definition}

Note that classic $\Delta$-Byzantine resilience is a special case of dimensional $\Delta$-Byzantine resilience. For classic Byzantine resilience defined in Definition~\ref{def:byz}, all the Byzantine values must lie in the same subset of workers, as shown in Figure~\ref{fig:viz_byz}(a).

In the following propositions, we show that \texttt{Mean} and \texttt{Krum} are not dimensional Byzantine resilient~($\E \|\aggr(\{\tilde{v}_i: i \in [m]\}) - g \|^2$ is unbounded). The proofs are provided in the appendix.
\begin{proposition}
\label{thm:mean_dim_byz}
The averaging aggregation rule is not dimensional Byzantine-resilient.
\end{proposition}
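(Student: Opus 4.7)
The plan is to exhibit, for any candidate bound $\Delta$, an adversarial choice of the Byzantine values such that $\E\|\mean(\{\tilde{v}_i : i \in [m]\}) - g\|^2 > \Delta$. Since the mean is a linear function of its inputs, a single arbitrary coordinate of a single vector can be used to shift the output by an unbounded amount, and the definition of dimensional Byzantine resilience places no restriction on the magnitudes of the Byzantine values.

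Concretely, assume $q \geq 1$ (otherwise the mean trivially satisfies the bound by the usual variance inequality). Fix a dimension $j \in [d]$ and an index $i_0 \in [m]$. Keep every coordinate of every vector equal to its correct value $(v_i)_\ell$ except for $(\tilde{v}_{i_0})_j$, which we set to an arbitrary scalar $M \in \R$ chosen by the adversary. This is admissible under the generalized Byzantine model of Equation~\eqref{equ:byz_model} because only one coordinate in dimension $j$ is corrupted and $1 \leq q$. Then the $j$th coordinate of the mean equals
\begin{equation*}
\bigl(\mean(\{\tilde{v}_i\})\bigr)_j \;=\; \frac{1}{m}\Bigl(M + \sum_{i \neq i_0}(v_i)_j\Bigr),
\end{equation*}
while every other coordinate coincides with the ordinary average of i.i.d.\ samples.

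Taking expectations and using $\E[(v_i)_j] = g_j$, the squared error in dimension $j$ becomes
\begin{equation*}
\E\Bigl[\bigl(\mean(\{\tilde{v}_i\})\bigr)_j - g_j\Bigr]^2 \;=\; \E\!\left[\frac{M - g_j}{m} + \frac{1}{m}\sum_{i\neq i_0}\bigl((v_i)_j - g_j\bigr)\right]^{2},
\end{equation*}
which, by expanding the square and noting that the cross term vanishes in expectation (the $v_i$'s are i.i.d.\ with mean $g$), reduces to $\tfrac{(M-g_j)^2}{m^2} + \tfrac{m-1}{m^2}\,\mathrm{Var}((G)_j)$. This lower bound on $\E\|\mean(\{\tilde{v}_i\}) - g\|^2$ tends to $+\infty$ as $|M| \to \infty$, so no constant $\Delta$ depending only on $m$ and $q$ can majorize it.

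I do not expect any real obstacle here; the argument is essentially a one-line observation that an unbounded perturbation in a single coordinate produces an unbounded perturbation of the mean. The only care needed is in formally matching the counterexample to Definition~\ref{def:dim_byz}, i.e.\ ensuring that no more than $q$ entries per dimension are altered (one entry in one dimension suffices) and that the bound $\Delta$ must hold uniformly over all adversarial choices of the arbitrary values.
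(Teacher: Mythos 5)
Your argument is correct and is a direct violation of Definition~\ref{def:dim_byz}: corrupting a single entry $(\tilde v_{i_0})_j$ to a value $M$ and letting $|M|\to\infty$ makes $\E\|\mean(\{\tilde v_i\})-g\|^2 \geq (M-g_j)^2/m^2$ exceed any constant depending only on $m$ and $q$, and the counterexample respects the per-dimension budget since only one value in one dimension is altered. The paper's proof uses a different adversary: it replaces the entire last vector by $-g-\sum_{i=1}^{m-1}v_i$, so the aggregate equals $-g/m$ exactly, and then argues that $\ip{\E[\aggr]}{g}=-\|g\|^2/m<0$, i.e., the expected update is an ascent direction and SGD cannot converge to critical points. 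The two routes buy slightly different things: yours is the cleaner quantitative match to the stated definition (unboundedness of the second moment, achieved with a single corrupted scalar), while the paper's adversary produces a \emph{bounded} aggregate yet still defeats the algorithm, exhibiting the qualitatively distinct failure mode that a magnitude-limited attack can reverse the descent direction. Either counterexample suffices for the proposition; no gap in yours.
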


\begin{proposition}
\label{thm:dim_byz}
Any aggregation rule $\aggr(\{\tilde{v}_i: i \in [m]\})$ that outputs $\aggr \in \{\tilde{v}_i: i \in [m]\}$ is not dimensional Byzantine resilient.
\end{proposition}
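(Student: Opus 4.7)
The plan is to exploit the fundamental asymmetry of the generalized Byzantine model: corruption budget is counted per coordinate, not per worker, so even a single adversarial coordinate per worker is enough to taint every candidate vector. Since the rule is constrained to return one of the $\tilde{v}_i$ unchanged, at least one coordinate of the output will be adversarial, and the adversary can blow that coordinate up without bound.

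Concretely, I would fix $q \geq 1$ (the case $q = 0$ is trivial since then no corruption occurs and there is nothing to prove) and construct an adversarial instantiation of the dimensional Byzantine model under which $\E\|\aggr(\{\tilde{v}_i : i \in [m]\}) - g\|^2$ can be driven arbitrarily large. First I would assign to each worker $i \in [m]$ a target coordinate $j_i \in [d]$ in such a way that no coordinate is used by more than $q$ workers; this is possible whenever $d \geq \lceil m/q \rceil$, which is the natural regime (and is clearly the setting where dimensional resilience is meaningful, since otherwise $qd < m$ already forces some worker to be clean on every coordinate). I would then corrupt only the entry $(\tilde{v}_i)_{j_i}$ for each $i$, replacing it by some arbitrary scalar $M > 0$, while leaving all other entries equal to the honest sample $(v_i)_{j_i'}$. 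By construction, for every coordinate $j \in [d]$ the number of Byzantine values among $\{(\tilde{v}_i)_j : i \in [m]\}$ is at most $q$, so the configuration is admissible under Definition~4.

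Next I would use the hypothesis that $\aggr(\{\tilde{v}_i : i \in [m]\}) \in \{\tilde{v}_i : i \in [m]\}$. For any outcome $k$ of the (possibly randomized) selection, the returned vector $\tilde{v}_k$ has its $j_k$-th coordinate equal to $M$, hence
\begin{equation*}
\|\aggr(\{\tilde{v}_i\}) - g\|^2 \geq \bigl( M - g_{j_k} \bigr)^2.
\end{equation*}
Taking expectation over both the randomness of the honest samples and of the rule, and using that $\E[g_{j_k}^2]$ is a fixed finite quantity independent of $M$, I obtain a lower bound of the form $\E\|\aggr - g\|^2 \geq M^2 - C$ for some constant $C$ depending only on $g$ and the distribution of $G$. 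Letting $M \to \infty$ shows the expected squared error is unbounded, contradicting the existence of any finite $\Delta = \Delta(m, q)$ as required by Definition~5.

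I do not anticipate a serious technical obstacle; the only subtle point is the book-keeping that corrupts each worker in at least one coordinate while keeping the per-coordinate corruption count at most $q$, which is a simple covering argument. I would also note explicitly that the argument covers randomized aggregation rules since the lower bound holds pointwise for every realization $k$ of the selection, so taking expectation preserves the blow-up as $M \to \infty$.
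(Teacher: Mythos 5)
Your proposal is correct and follows essentially the same route as the paper: the paper's proof corrupts the $i$th coordinate of the $i$th vector (the diagonal, implicitly assuming $d \geq m$) so that every candidate is tainted in some coordinate, and then notes the adversary can make that entry arbitrarily bad. Your version generalizes the assignment to any covering with at most $q$ corruptions per coordinate and bounds $\E\|\aggr - g\|^2$ from below directly against Definition~\ref{def:dim_byz}, which is if anything slightly tighter to the stated definition than the paper's inner-product/descent argument, but it is the same counterexample.
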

\texttt{Krum} chooses the vector with the minimal score, which is not dimensional Byzantine-resilient.
\begin{proposition}
\label{lem:krum_dim_byz}
$\krum(\cdot)$ is not dimensional Byzantine-resilient.
\end{proposition}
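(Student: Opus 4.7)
The plan is to reduce this directly to Proposition~\ref{thm:dim_byz}. By Definition~\ref{def:krum}, the output of $\krum$ is literally one of its input vectors, namely $\tilde{v}_k$ where $k=\argmin_{i\in[m]} \sum_{i\to j}\|\tilde{v}_i-\tilde{v}_j\|^2$. Thus $\krum$ belongs to the class of aggregation rules covered by Proposition~\ref{thm:dim_byz}, and the conclusion that $\E\|\krum(\{\tilde{v}_i\})-g\|^2$ is unbounded follows at once. The entire proof is essentially: ``$\krum(\{\tilde{v}_i\})\in\{\tilde{v}_i : i\in[m]\}$, apply Proposition~\ref{thm:dim_byz}.''

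For readers who want a concrete witness rather than just a black-box appeal, I would sketch the underlying adversarial construction. Assume $q\geq 1$ and $d\geq m$. The attacker corrupts exactly one coordinate in each worker, choosing a different dimension for each worker (for instance, corrupt $(\tilde{v}_i)_i$ for $i=1,\ldots,m$). This respects the generalized Byzantine constraint, since every dimension contains at most one corrupted entry, which is $\leq q$. All corrupted entries are set to a single large value $C$. Then for every pair $i\neq j$, the squared distance $\|\tilde{v}_i-\tilde{v}_j\|^2$ is dominated by the $2C^2$ contribution from the two distinct poisoned coordinates, plus $O(1)$ noise from the clean coordinates. Hence every candidate has roughly the same Krum score, and whichever $\tilde{v}_k$ is selected contains a coordinate equal to $C$. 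Taking $C\to\infty$ drives $\|\tilde{v}_k-g\|^2\to\infty$, so no finite $\Delta$ can bound the expected error.

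The key step in the order I would carry it out: first observe the structural property that $\krum$ returns an element of its input; second, cite Proposition~\ref{thm:dim_byz} to close the argument; optionally, third, sanity-check the reduction by exhibiting the coordinate-disjoint large-$C$ attack above so it is transparent that Proposition~\ref{thm:dim_byz}'s hypothesis is vacuously non-trivial in the $\krum$ setting.

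I do not expect a serious obstacle. The only subtlety worth flagging is the worry that $\krum$ might ``dodge'' a Byzantine input by selecting a clean one; the disjoint-coordinate attack above rules this out because \emph{every} worker carries one poisoned coordinate, so no clean candidate exists for $\krum$ to select. This closes the only conceptual gap between the one-line reduction and a fully convincing argument.
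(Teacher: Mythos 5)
Your proof is correct and follows the same route as the paper: the paper states the result as an immediate consequence of Proposition~\ref{thm:dim_byz}, since by Definition~\ref{def:krum} the output of $\krum$ is always one of the input vectors, and the paper's proof of Proposition~\ref{thm:dim_byz} uses exactly your coordinate-disjoint attack (corrupting the $i$th dimension of the $i$th vector with an arbitrarily large value). Your additional remark that every candidate carries a poisoned coordinate, so $\krum$ cannot ``dodge'' the attack, is a helpful explicit articulation of why the reduction has content, but it is the same argument.
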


\section{Trimmed-mean-based aggregation}
\label{sec:median}

With the Byzantine failure model defined in Equation~(\ref{equ:byz_worker}) and (\ref{equ:byz_model}), we propose two trimmed-mean-based aggregation rules, which are Byzantine resilient under certain conditions. 

%
%

\subsection{Trimmed mean}
To define the trimmed mean, we first define the order statistics.
\begin{definition}(Order Statistics)
\label{def:ord_stat}
By sorting the scalar sequence $\{u_i: i \in [m]\}$, we get $u_{1:m} \leq u_{2:m} \leq \ldots \leq u_{m:m}$, where $u_{k:m}$ is the $k$th smallest element in $\{u_i: i \in [m]\}$.
\end{definition}
Then, we define the trimmed mean.
\begin{definition}(Trimmed Mean)
\label{def:trim}
For $b \in \mathbb{Z} \cap [0, \lceil m/2 \rceil - 1]$, the $b$-trimmed mean of the set of scalars $\{u_i: i \in [m]\}$ is defined as follows:
\[
\trmean_b(\{u_i: i \in [m]\}) = \frac{1}{m-2b} \sum_{k=b+1}^{m-b} u_{k:m},
\] where  $u_{k:m}$ is the $k$th smallest element in $\{u_i: i \in [m]\}$ defined in Definition~\ref{def:ord_stat}. The high-dimensional version, $Trmean_b(\{\tilde{v}_i: i \in [m]\})$, simply applies $\trmean_b(\cdot)$ in the coordinate-wise manner.
\end{definition}

The following theorem claims that by using $\trmean_b(\cdot)$, the resulting vector is dimensional Byzantine resilient. A proof is provided in the appendix.

\begin{theorem}(Bounded Variance)
\label{thm:bound_trmean_var}
Let $v_1, \ldots, v_m$ be any i.i.d. random $d$-dimensional vectors s.t. $v_i \sim G$, with $\E[G] = g$ and $\E\|G-g\|^2 \leq V$. In each dimension, $q$ values are Byzantine, which yields $\{\tilde{v}_i: i \in [m]\}$. If $2q < m$, we have 
$
\E \|\trmean_b(\{\tilde{v}_i: i \in [m]\}) - g \|^2 \leq \Delta_1,
$
where 
$
\Delta_1 = \frac{2(b+1)(m-q)}{(m-b-q)^2} V.
$
\end{theorem}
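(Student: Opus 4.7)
The plan is to reduce to a one-dimensional analysis per coordinate and then use an order-statistic sandwich bound. Since $\trmean_b$ is applied coordinate-wise, $\|\trmean_b(\{\tilde v_i\}) - g\|^2 = \sum_{j=1}^d (T_j - g_j)^2$, where $T_j$ denotes the one-dimensional trimmed mean in coordinate $j$. Taking expectations and using $\sum_j V_j \leq V$ (with $V_j$ the per-coordinate variance of $G$), it suffices to show $\E(T_j - g_j)^2 \leq \frac{2(b+1)(m-q)}{(m-b-q)^2} V_j$ for each $j$ and sum over $j$.

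Fix a coordinate and drop the subscript. Let $u_1, \ldots, u_c$ with $c = m-q$ be the correct values at this coordinate and $u_{(1)} \leq \cdots \leq u_{(c)}$ their order statistics. The core step is a sandwich property: for every $k \in \{b+1, \ldots, m-b\}$, $u_{(k-q)} \leq \tilde{u}_{k:m} \leq u_{(k)}$. This is a counting argument: since there are at most $q$ Byzantine values in total, $\tilde u_{k:m}$ cannot exceed the $k$-th smallest correct value, and it cannot lie below the $(k-q)$-th smallest correct value. (The bound is meaningful only when $b \geq q$, so that the indices $k-q$ and $k$ stay in $[1,c]$.) Consequently $T$ is sandwiched between the two ``correct'' trimmed averages $\frac{1}{m-2b} \sum_{k=b+1}^{m-b} u_{(k-q)}$ and $\frac{1}{m-2b} \sum_{k=b+1}^{m-b} u_{(k)}$, so $(T-g)^2$ is bounded by the sum of the squared deviations of these two averages from $g$.

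The remaining task is to convert these bounds into a second-moment estimate using only the variance hypothesis. I would rewrite each correct trimmed average as a full sum over correct values minus tail sums, e.g. $\sum_{\ell=b+1}^{m-b} u_{(\ell)} = \sum_{\ell=1}^{c} u_{(\ell)} - \sum_{\ell=1}^{b} u_{(\ell)} - \sum_{\ell=m-b+1}^{c} u_{(\ell)}$, and control each piece via the identity $\sum_\ell (u_{(\ell)} - g)^2 = \sum_i (u_i - g)^2$ (whose expectation is at most $cV = (m-q)V$) combined with Cauchy--Schwarz on the partial tail sums of at most $b$ and at most $b-q$ terms. The main obstacle will be extracting the precise constant $\frac{2(b+1)(m-q)}{(m-b-q)^2}$: a naive combination of Jensen's inequality with the sandwich gives only the weaker $\frac{2(m-q)}{m-2b} V$. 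To reach the stated bound I expect the proof to use a finer decomposition, matching $T$ against an auxiliary average of the $m-b-q$ untrimmed correct values---which supplies the $(m-b-q)^2$ denominator via the variance of an i.i.d.\ average---while the residual, involving at most $b+1$ extremal order statistics, contributes the $(b+1)$ factor.
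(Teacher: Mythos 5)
Your proposal follows essentially the same route as the paper's proof: a coordinate-wise reduction, the order-statistic sandwich $u_{(k-q)} \le \tilde u_{k:m} \le u_{(k)}$ for $k \in \{b+1,\dots,m-b\}$ (the paper's Lemma~2, proved by the same counting argument), and then a decomposition of the resulting correct-value averages into the full i.i.d.\ mean over the $m-q$ correct values (which supplies the $(m-b-q)^2$ denominator and the $1$ in $b+1$) plus a residual of exactly $b$ extremal order statistics controlled by Jensen/Cauchy--Schwarz and $\sum_\ell (u_{(\ell)}-g)^2 = \sum_i (u_i-g)^2$ (which supplies the $b$). The ``finer decomposition'' you anticipate at the end is precisely what the paper does --- it first relaxes the $(m-2b)$-term sandwich averages to averages over subsets of $m-b-q$ correct values, using that the mean of the smallest (resp.\ largest) $m-2b$ elements of a sorted $(m-b-q)$-element set is dominated by (resp.\ dominates) the mean of the whole set --- and your observation that the argument implicitly requires $b \ge q$ applies equally to the paper's Lemma~2, which states this hypothesis even though the theorem itself does not.
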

Theorem~\ref{thm:bound_trmean_var} tells us that the upper bound of the variance $\E \left\| \trmean_b(\{\tilde{v}_i: i \in [m]\}) - g \right\|^2$ decreases when $m$ increases, $b$ decreases, $q$ decreases, or $V$ decreases.


\subsection{Beyond trimmed mean}
Using the trimmed mean, we have to drop $2b$ elements for each dimension. In this section, we explore the possibility of aggregating more elements. 
To be more specific, for each dimension, we take the average of the $m-b$ values nearest to the trimmed mean. We call the resulting aggregation rule \texttt{Phocas}~\footnote{The name of a Byzantine emperor.}, which is defined as follows:
\begin{definition}(Phocas)
\label{def:phocas}
We sort the scalar sequence $\{u_i: i \in [m]\}$ by using the distance to a certain value $y$:
$
|u_{1/y} - y| \leq |u_{2/y} - y| \leq \ldots \leq |u_{m/y} - y|, 
$
where $u_{k/y}$ is the $k$th nearest element to $y$ in $\{u_i: i \in [m]\}$. 
\textit{Phocas} is the average of the first $(m-b)$ nearest elements to the $b$-trimmed mean $Trmean_b = \trmean_b(\{u_i: i \in [m]\})$:
\begin{align*}
\phocas_b(\{u_i: i \in [m]\}) = \frac{\sum_{i=1}^{m-b} u_{i/Trmean_b}}{m-b}.
\end{align*}
The high-dimensional version, $\phocas_b(\{\tilde{v}_i: i \in [m]\})$, simply applies $\phocas_b(\cdot)$ in the coordinate-wise manner.
\end{definition}

We show that $\phocas(\cdot)$ is dimensional Byzantine-resilient.
\begin{theorem}(Bounded Variance)
\label{thm:bound_phocas_var}
Let $v_1, \ldots, v_n$ be any i.i.d. random $d$-dimensional vectors s.t. $v_i \sim G$, with $\E[G] = g$ and $\E\|G-g\|^2 \leq V$. In each dimension, $q$ values are Byzantine, which yields $\{\tilde{v}_i: i \in [m]\}$ If $2q < n$, we have 
$
\E \|\phocas_b(\{\tilde{v}_i: i \in [m]\}) - g \|^2 \leq \Delta_2,
$
where 
$
\Delta_2 = \left[4 + \frac{12(b+1)(m-q)}{(m-b-q)^2} \right] V.
$
\end{theorem}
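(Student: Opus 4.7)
My plan is to decompose $\phi := \phocas_b(\{\tilde v_i\})$ around $\tau := \trmean_b(\{\tilde v_i\})$ so I can reuse Theorem~\ref{thm:bound_trmean_var}. By the elementary bound $\|a+b\|^2 \le 2\|a\|^2 + 2\|b\|^2$,
\[
\E\|\phi - g\|^2 \;\le\; 2\,\E\|\phi - \tau\|^2 + 2\,\E\|\tau - g\|^2 \;\le\; 2\,\E\|\phi - \tau\|^2 + 2\Delta_1,
\]
so the whole task reduces to bounding $\E\|\phi - \tau\|^2$ by an expression of the form $\alpha V + \beta \Delta_1$; combined with the $2\Delta_1$ already collected, this will produce the claimed $\Delta_2 = 4V + 6\Delta_1$.

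For the remaining term I would work coordinate by coordinate. Fix a coordinate $j$, write $u_i := (\tilde v_i)_j$, and let $C_j, B_j \subseteq [m]$ denote the correct/Byzantine indices (of sizes $m-q$ and $q$). Let $S_j$ be the $m-b$ indices whose values are closest to $\tau_j$, so that $\phi_j = \tfrac{1}{m-b}\sum_{i \in S_j} u_i$. A single application of Jensen's inequality gives
\[
(\phi_j - \tau_j)^2 \;\le\; \frac{1}{m-b}\sum_{i \in S_j}(u_i - \tau_j)^2.
\]

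The crux is a pairing argument that converts Byzantine summands into correct ones. Let $r_j$ be the $(m-b)$-th order statistic of $\{|u_i - \tau_j|\}_{i \in [m]}$; by construction every $i \in S_j$ satisfies $|u_i - \tau_j| \le r_j$, while every index outside $S_j$ satisfies $|u_i - \tau_j| \ge r_j$. A direct count gives $|\bar S_j \cap C_j| = b - q + |S_j \cap B_j| \ge |S_j \cap B_j|$ (using the natural condition $b \ge q$, which is implicit in the paper's regime). Hence I can inject the Byzantine indices inside $S_j$ into distinct correct indices outside $S_j$ whose squared deviations from $\tau_j$ dominate them, yielding
\[
\sum_{i \in S_j}(u_i - \tau_j)^2 \;\le\; \sum_{i \in C_j}(u_i - \tau_j)^2.
\]
The remainder is routine: expand $(u_i - \tau_j)^2 \le 2(u_i - g_j)^2 + 2(\tau_j - g_j)^2$, take expectations (using $\E(u_i - g_j)^2 \le V_j$ for $i \in C_j$ with $\sum_j V_j \le V$, and $\sum_j \E(\tau_j - g_j)^2 \le \Delta_1$), divide by $m - b$, and sum over $j$. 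The counting prefactor $\tfrac{m-q}{m-b}$ is absorbed by the inequality $m-q \le 2(m-b)$, which follows from $b \le \lceil m/2 \rceil -1$.

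The step I expect to be the main obstacle is the pairing inequality: getting the counting right (so the bijection from Byzantine-in-$S_j$ into correct-in-$\bar S_j$ actually exists), and arguing carefully that each Byzantine squared deviation is dominated by its image's. That is where the analysis really uses $2q < m$ in a structural way, and it is also the step that determines whether the prefactor on $V$ in $\Delta_2$ comes out as $4$ rather than a looser absolute constant. Once this estimate is in hand, the rest of the proof is algebra combining it with Theorem~\ref{thm:bound_trmean_var}.
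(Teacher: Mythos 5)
Your overall strategy is the paper's strategy: decompose around the trimmed mean via $\|\phi-g\|^2\le 2\|\phi-\tau\|^2+2\|\tau-g\|^2$, invoke Theorem~\ref{thm:bound_trmean_var} for the second term, apply Jensen coordinate-wise to the first, and then replace the Byzantine summands among the $m-b$ selected values by correct values. Your pairing/injection argument is a valid (if slightly heavier) substitute for the paper's one-line lemma, which simply observes that the $i$th closest element to $\tau_j$ among all $m$ values is at least as close as the $i$th closest among the $m-q$ correct ones, so $|\tilde v_{i/\tau}-\tau|\le|v_{i/\tau}-\tau|$ for each $i\le m-b$ (both arguments need $b\ge q$, which you correctly flag).

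The one genuine problem is quantitative: your normalization does not recover the stated constant. After the pairing you bound $\frac{1}{m-b}\sum_{i\in S_j}(u_i-\tau_j)^2 \le \frac{1}{m-b}\sum_{i\in C_j}(u_i-\tau_j)^2 = \frac{m-q}{m-b}\cdot\frac{1}{m-q}\sum_{i\in C_j}(u_i-\tau_j)^2$ and absorb the prefactor with $\frac{m-q}{m-b}\le 2$. Tracking this through gives $\E(\phi_j-\tau_j)^2\le 4V_j+4\E(\tau_j-g_j)^2$ and hence $\E\|\phi-g\|^2\le 8V+10\Delta_1$, which is weaker than the claimed $\Delta_2=4V+6\Delta_1$. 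The paper avoids the factor $2$ entirely: since the $m-b$ selected squared deviations are dominated term-by-term by the $m-b$ \emph{smallest} correct squared deviations, and the average of the smallest $m-b$ of $m-q$ nonnegative numbers is at most the average of all $m-q$ of them (using $m-b\le m-q$), one gets directly $\frac{1}{m-b}\sum_{i\in S_j}(u_i-\tau_j)^2\le\frac{1}{m-q}\sum_{i\in C_j}(u_i-\tau_j)^2$ with prefactor $1$. That yields $\E(\phi_j-\tau_j)^2\le 2V_j+2\E(\tau_j-g_j)^2$ and the stated $4V+6\Delta_1$. So replace ``divide by $m-b$ and absorb $\frac{m-q}{m-b}\le 2$'' with ``compare the average over $S_j$ to the average over $C_j$'' and your argument closes exactly.
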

The \texttt{Phocas} aggregation can be viewed as a trimmed average centering at the trimmed mean, which filters out the values far away from the trimmed mean. Similar to the trimmed mean, the variance of \texttt{Phocas} decreases when $m$ increases, $b$ decreases, $q$ decreases, or $V$ decreases.

\subsection{Convergence analysis}
In this section, we provide the convergence guarantees for synchronous SGD with $\Delta$-Byzantine-resilient aggregation rules. The proofs can be found in the appendix.
We first introduce the two conditions necessary in our convergence analysis.
\begin{definition}
If $F(x)$ is $L_F$-smooth, then 
$
F(y) - F(x) \leq \ip{\nabla F(x)}{y-x} + \frac{L_F}{2} \|y-x\|^2, \forall x,y \in \R^d,
$
where $L_F \geq 0$.
If $F(x)$ is $\mu_F$-strongly convex, then 
$
\ip{\nabla F(x)}{y-x} + \frac{\mu_F}{2} \|y-x\|^2 \leq F(y) - F(x), \forall x,y \in \R^d,
$
where $\mu_F \geq 0$.
\end{definition}
First, we prove that for strongly convex and smooth loss functions, SGD with $\Delta$-Byzantine-resilient aggregation rules has linear convergence with a constant error.
\begin{theorem}
Assume that $F(x)$ is $\mu_F$-strongly convex and $L_F$-smooth, where $0 < \mu_F \leq L_F$. We take $\gamma \leq \frac{2}{\mu_F + L_F}$. In any iteration $t$, the correct gradients are $v_i^t = \nabla F_i(x^t)$. Using any (classic or dimensional) $\Delta$-Byzantine-resilient aggregation rule with corresponding assumptions, we obtain linear convergence with a constant error after $T$ iterations with synchronous SGD:
\begin{align*}
\E\|x^{T} - x^*\| \leq \left( 1- \frac{\gamma \mu_F L_F}{\mu_F + L_F} \right)^T \|x^0-x^*\| + \frac{\mu_F + L_F}{\mu_F L_F} \gamma  \sqrt{\Delta}.
\end{align*}
\end{theorem}

Then, we prove the convergence of SGD for general smooth loss functions.
\begin{theorem}
Assume that $F(x)$ is $L_F$-smooth and potentially non-convex, where $0 < L_F$. We take $\gamma \leq \frac{1}{L_F}$. In any iteration $t$, the correct gradients are $v_i^t = \nabla F_i(x^t)$. Using any (classic or dimensional) $\Delta$-Byzantine-resilient aggregation rule with corresponding assumptions, we obtain linear convergence with a constant error after $T$ iterations with synchronous SGD:
\begin{align*}
\frac{\sum_{i=0}^{T-1} \E\|\nabla F(x^i)\|^2}{T} \leq \frac{2}{\gamma T} \left[ F(x^0) - F(x^*) \right] + \Delta.
\end{align*}
\end{theorem}

\subsection{Time complexity}
For the trimmed mean, we only need to find the order statistics of each dimension. To do so, we use the so-called \textit{selection algorithm}~\cite{blum1973time} with linear time complexity to find the $k$th smallest element. In general, the time complexity is $O(d(m-2b)m)$. When $b$ is large, the factor $m-2b \ll m$ can be ignored, which yields the nearly linear time complexity $O(dm)$. When $b$ is small, the time complexity is the same as the sorting algorithm, which is $O(dm\log m).$ For \texttt{Phocas}, the computation additional to computing the trimmed takes linear time $O(dm)$. Thus, the time complexity is the same as \texttt{Trmean}.
Note that for \texttt{Krum} and \texttt{Multi-Krum}, the time complexity is $O(dm^2)$~\cite{blanchard2017machine}.

\begin{table*}[htb!]
\caption{Experiment Summary}
\label{table:datasets}
\begin{center}
\begin{tabular}{|l|r|r|r|r|r|r|r|r|}
\hline 
Dataset & \# train & \# test  & $\gamma$ & \# rounds & Batchsize & Evaluation metric   \\ \hline 
MNIST~\cite{loosli2007training} & 60k & 10k & 0.1 & 500 & 32 & top-1 accuracy \\ \hline 
CIFAR10~\cite{krizhevsky2009learning} & 50k & 10k & 5e-4 & 4000 & 128 & top-3 accuracy \\ \hline 
\end{tabular} 
\end{center}
\end{table*}

\begin{figure*}[htb!]
\centering
\subfigure[Gaussian attack. 6 out of 20 gradient vectors are replaced by i.i.d. random vectors drawn from a Gaussian distribution with 0 mean and 200 standard deviation.]{\includegraphics[width=0.48\textwidth]{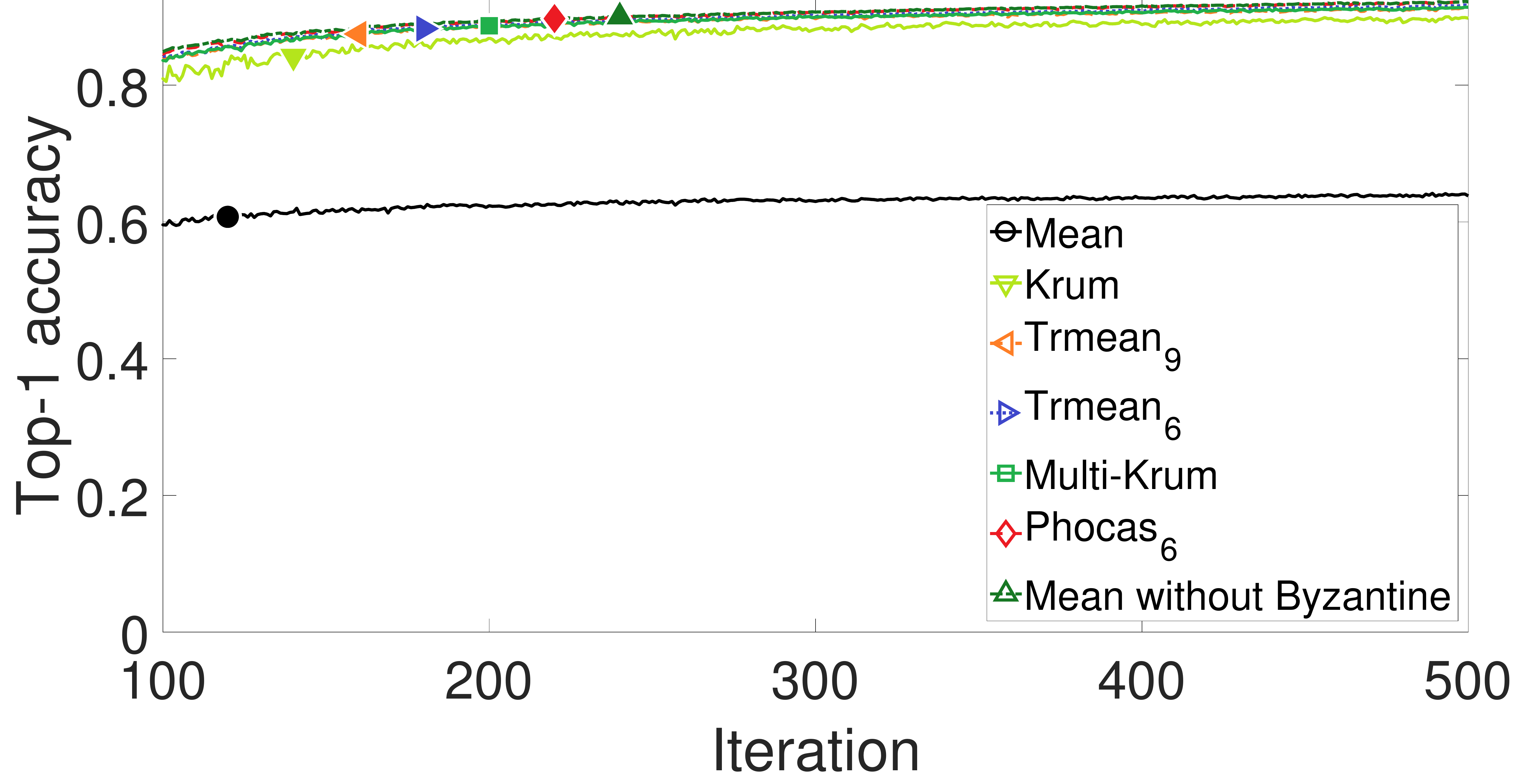}}\hspace{10pt}
\subfigure[Omniscient attack. 6 out of 20 gradient vectors are replaced by the negative sum of all the correct gradients, scaled by a large constant~(1e20).]
{\includegraphics[width=0.48\textwidth]{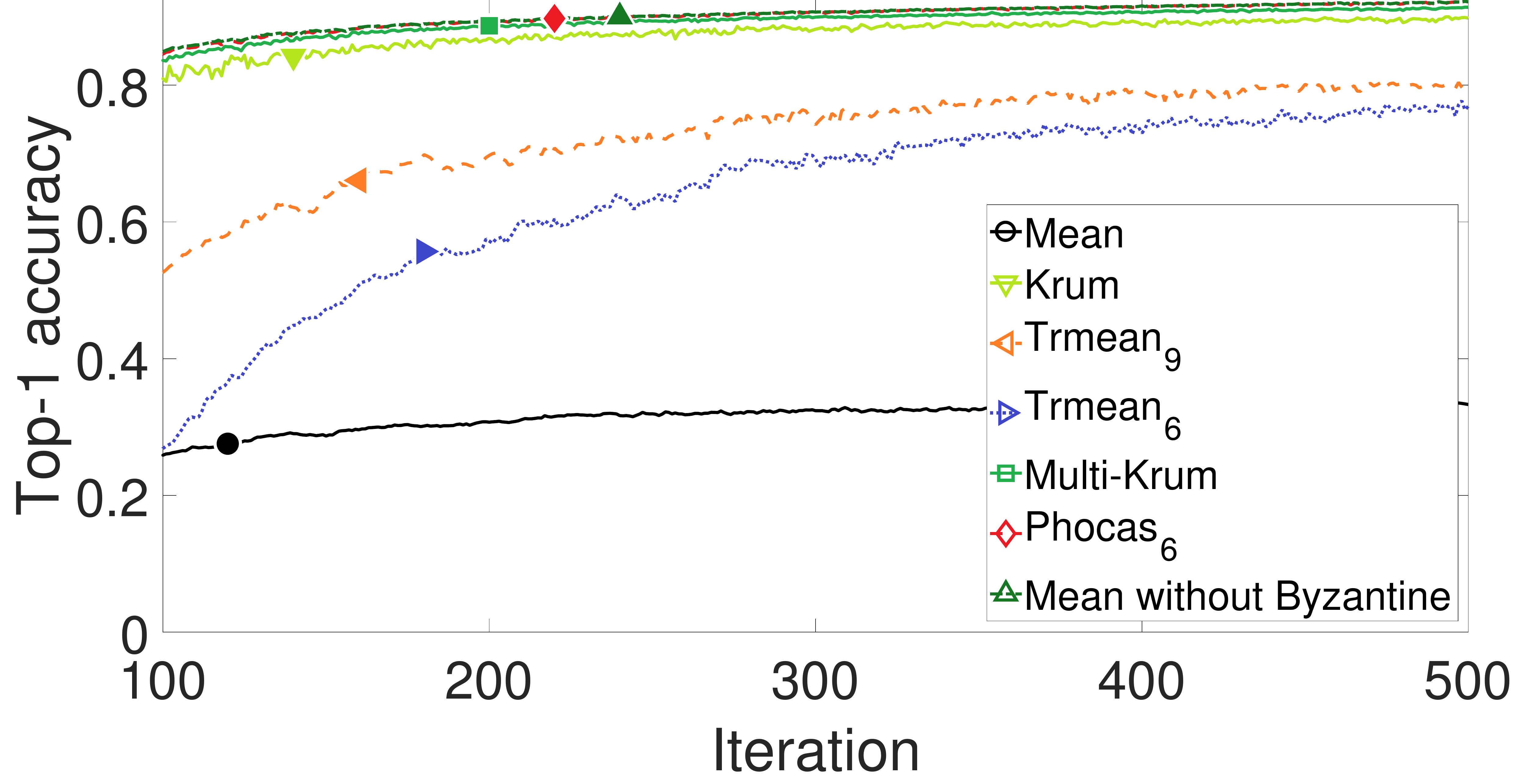}}
\subfigure[Bit-flip attack. For each of the first 1000 dimensions, 1 of the 20 floating numbers is manipulated by flipping the 22th, 30th, 31th and 32th bits.]{\includegraphics[width=0.48\textwidth]{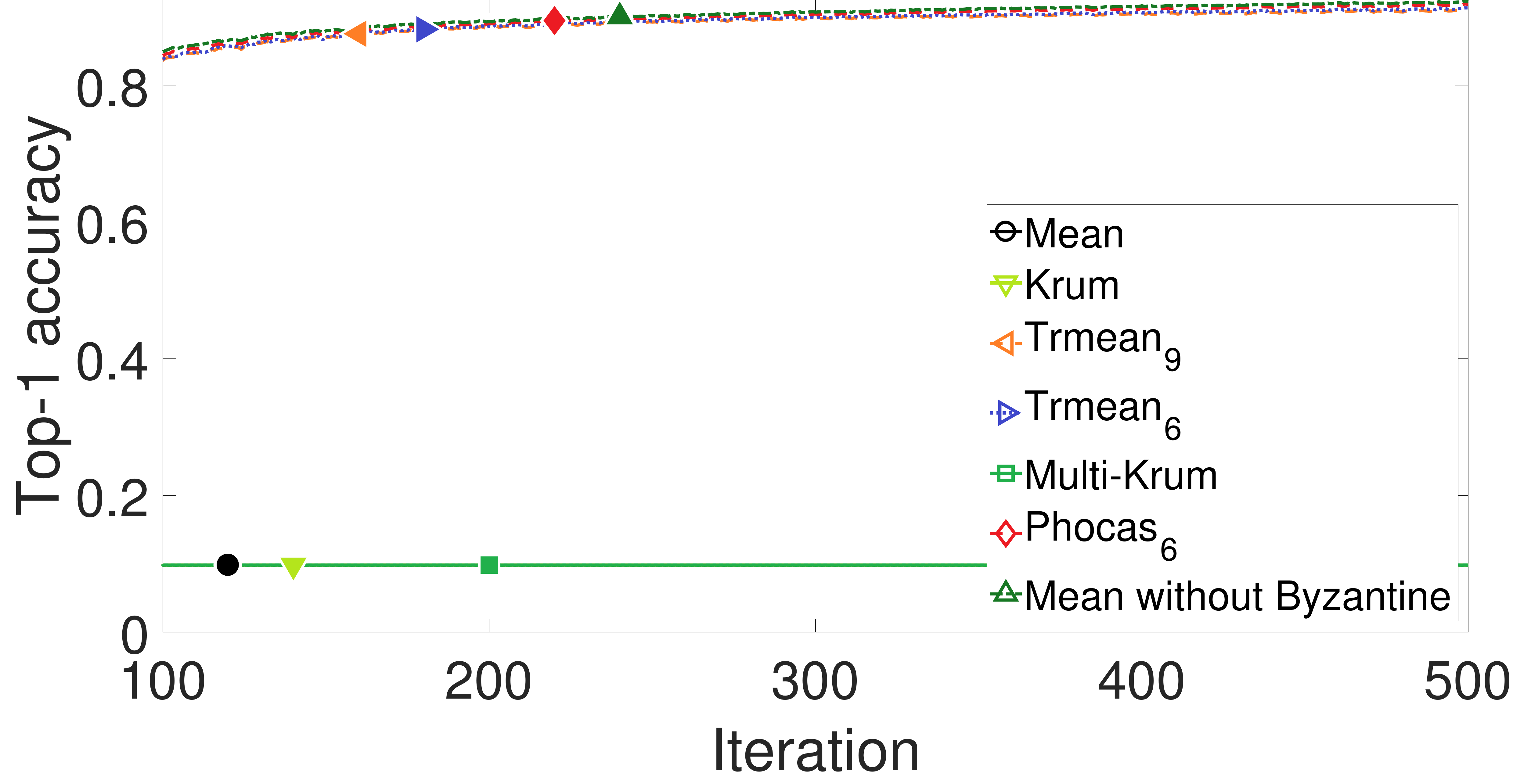}}\hspace{10pt}
\subfigure[Gambler attack. The parameters are evenly assigned to 20 servers. For one single server, any received value is multiplied by $-1e20$ with probability 0.05\%.]{\includegraphics[width=0.48\textwidth]{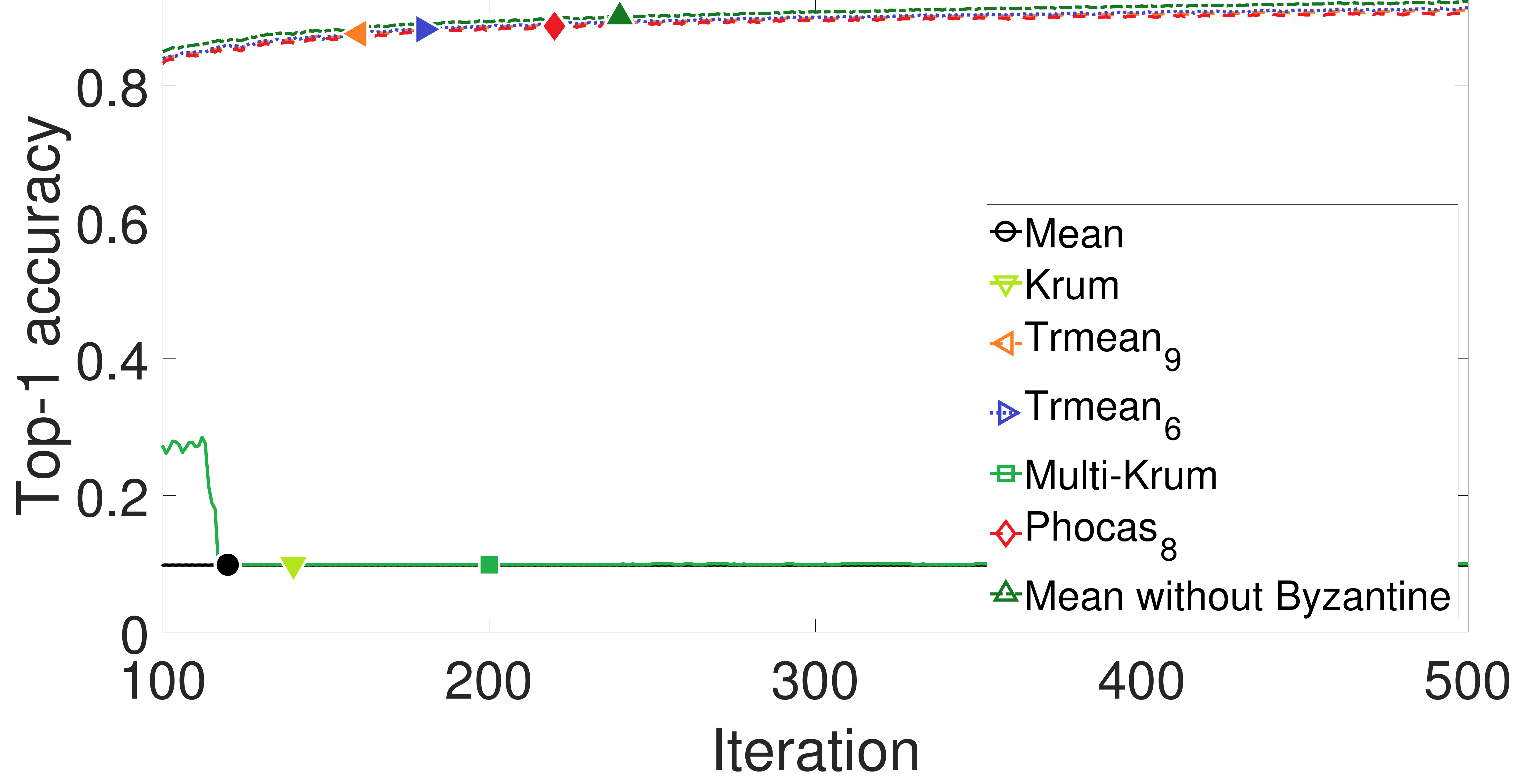}}
\caption{Top-1 accuracy of MLP on MNIST with different attacks. }
\label{fig:mnist_mlp}
\end{figure*}
\begin{figure*}[htb!]
\vspace{-0.3cm}
\centering
\subfigure[Accuracy of $Krum$-based aggregations under bit-flip attack, at the end of training, when $q$ varies.]{\includegraphics[width=0.48\textwidth,height=0.12\textheight]{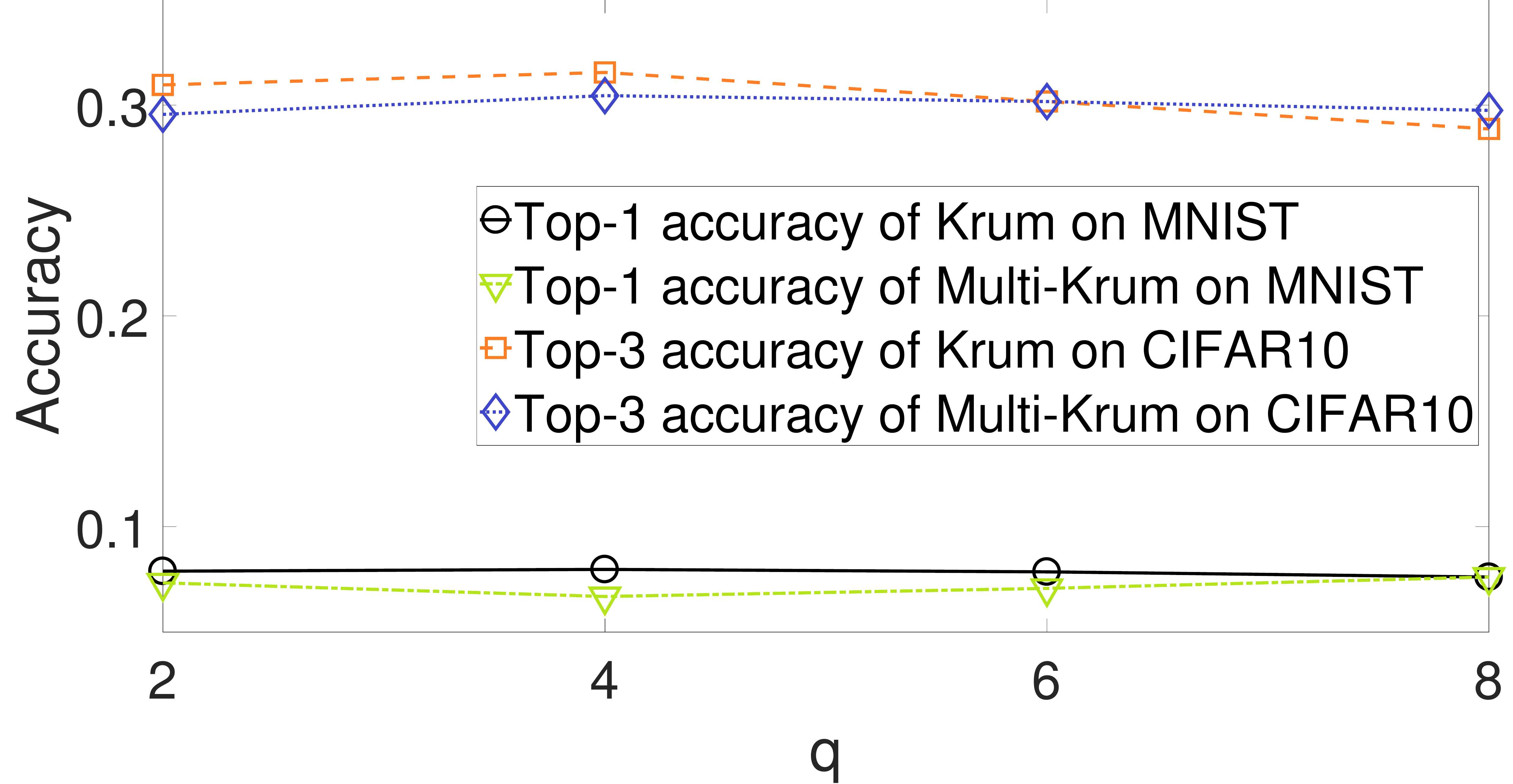}}\hspace{10pt}
\subfigure[Maximal accuracy under gambler attack throughout training, when $b$~($q$ for Krum and Multi-Krum) varies.]
{\includegraphics[width=0.48\textwidth,height=0.12\textheight]{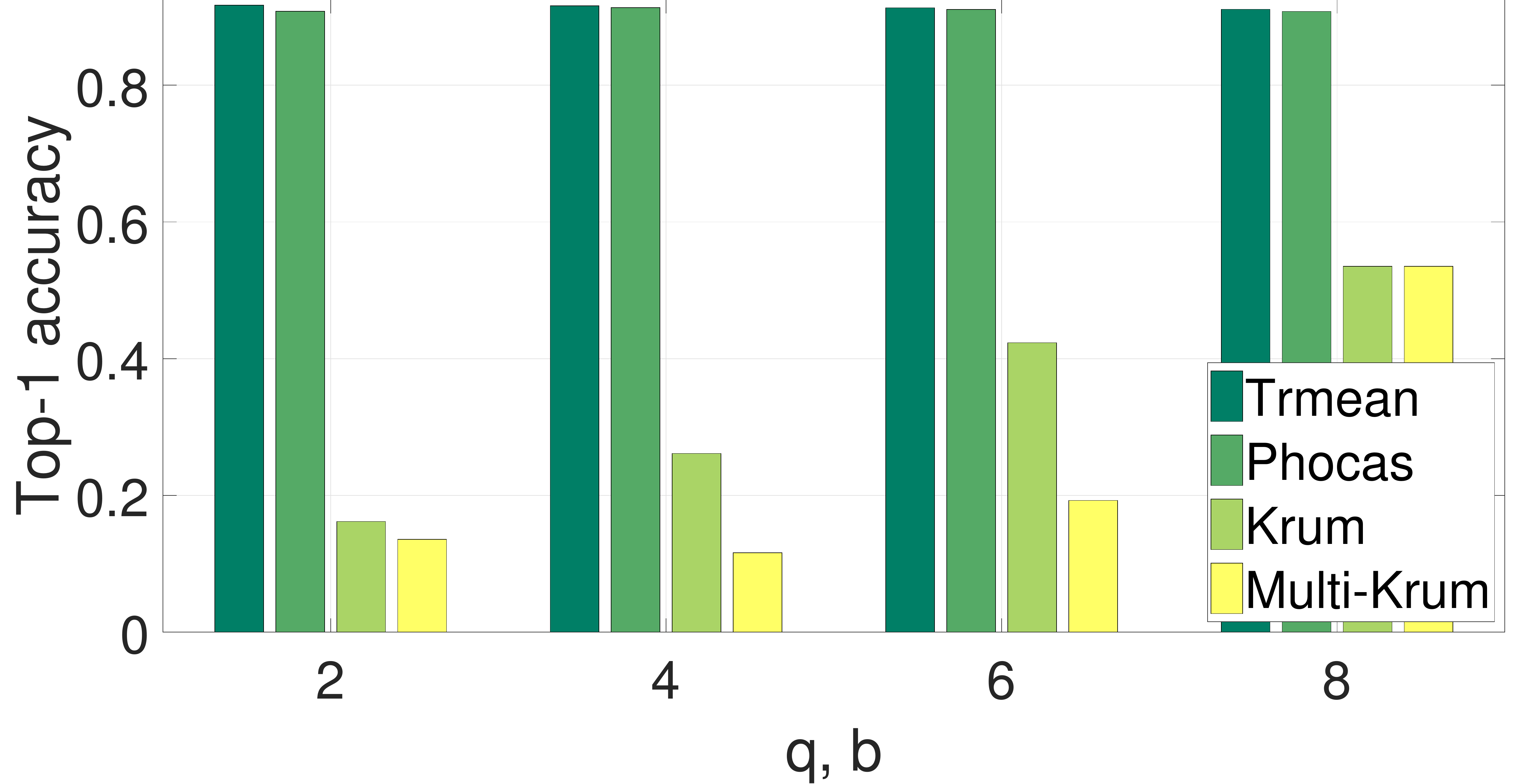}}
\caption{Sensitivity to hyperparameters.}
\label{fig:sensitivity}
\vspace{-0.3cm}
\end{figure*}

\section{Experiments}
\label{sec:experiments}
In this section, we evaluate the Byzantine resilience properties of the proposed algorithms. We consider two image classification tasks: handwritten digits classification on MNIST dataset using multi-layer perceptron~(MLP), and object recognition on convolutional neural network~(CNN). The details of these two neural networks can be found in the appendix. There are $m=20$ worker processes. We repeat each experiment for ten times and take the average. 
The details of the datasets and the default hyperparameters of the corresponding models are listed in Table~\ref{table:datasets}. We use top-1 or top-3 accuracy on testing sets~(disjoint from the training sets) as evaluation metrics.

The baseline aggregation rules are \texttt{Mean}, \texttt{Krum}~(Definition~\ref{def:krum}), and \texttt{Multi-Krum}. 
\texttt{Multi-Krum} is a variant of \texttt{Krum} defined in \citet{blanchard2017machine}, which takes the average on several vectors selected by multiple rounds of \texttt{Krum}. We also include the averaging without Byzantine failures as a baseline, which is referred to as \texttt{Mean without Byzantine}.
We compare these baseline algorithms with the proposed algorithms: \texttt{Trmean} defined in Definition~\ref{def:trim}, and \texttt{Phocas} defined in Definition~\ref{def:phocas}, under different attacks.

Note that all the experiments of CNN on CIFAR10 show similar results with the experiments of MLP on MNIST. Thus, we put the results of CNN in the appendix.

%

\subsection{Byzantine resilience}
In this section, we test the Byzantine resilience of the proposed algorithms under different kinds of attacks. The zoomed figure of each experiment can be found in the appendix.

\subsubsection{Gaussian attack}
We test classic Byzantine resilience in this experiment. 
We consider the attackers that replace some of the gradient vectors with Gaussian random vectors with zero mean and isotropic covariance matrix with standard deviation 200. We refer to this kind of attack as \textit{Gaussian attack}. 6 out of the 20 gradient vectors are Byzantine. The results are shown in Figure~\ref{fig:mnist_mlp}(a).
As expected, averaging is not Byzantine resilient. The gaps between all the other algorithms are tiny. \texttt{Phocas} performs like there are no Byzantine failures at all. \texttt{Krum}, \texttt{Multi-Krum}, and \texttt{Trmean} converge slightly slower.  

\subsubsection{Omniscient attack}
We test classic Byzantine resilience in this experiment. 
This kind of attacker is assumed to know the all the correct gradients. For each Byzantine gradient vector, the gradient is replaced by the negative sum of all the correct gradients, scaled by a large constant~(1e20 in the experiments). Roughly speaking, this attack tries to make the parameter server go into the opposite direction with a long step. 6 out of the 20 gradient vectors are Byzantine. The results are shown in Figure~\ref{fig:mnist_mlp}(b). \texttt{Phocas} still performs just like there is no failure. \texttt{Multi-Krum} is not as good as \texttt{Phocas}, but the gap is small.  \texttt{Krum} converges slower. However, \texttt{Trmean} converges to bad solutions. 

\subsubsection{Bit-flip attack}
We test dimensional Byzantine resilience in this experiment. 
Knowing the information of other workers can be difficult in practice. Thus, we use more realistic scenario in this experiment. The attacker only manipulates some individual floating numbers by flipping the 22th, 30th, 31th and 32th bits. Furthermore, we test dimensional Byzantine resilience in this experiment. For each of the first 1000 dimensions, 1 of the 20 floating numbers is manipulated using the bit-flip attack. The results are shown in Figure~\ref{fig:mnist_mlp}(c). As expected, only \texttt{Phocas} and \texttt{Trmean} are dimensional Byzantine resilient.

Note that for \texttt{Krum} and \texttt{Multi-Krum}, their assumption requires the number of Byzantine vectors $q$ to satisfy $2q + 2 < m$, which means $q \leq 8$ in our experiments. However, because each gradient is partially manipulated, all the $m$ vectors are Byzantine, which breaks the assumption of the Krum-based algorithms. Furthermore, to compute the distances to the $(m-q-2)$-nearest neighbours, $m-q-2$ must be positive. To test the performance of \texttt{Krum} and \texttt{Multi-Krum}, we set $q = 8$ for these two algorithms so that they can still be executed. 
Furthermore, we test whether tuning $q$ can make a difference. The results are shown in Figure~\ref{fig:sensitivity}(a). Obviously, whatever $q$ we use, \texttt{Krum}-based algorithms get stuck around bad solutions. 

\subsubsection{General attack with multiple servers}
\label{sec:general_attack}
We test general Byzantine resilience in this experiment. 
We evaluate the robust aggregation rules under a more general and realistic type of attack. It is very popular to partition the parameters into disjoint subsets, and use multiple server nodes to storage and aggregate them~\citep{Li2014ScalingDM,Li2014CommunicationED,Ho2013MoreED}. We assume that the parameters are evenly partitioned and assigned to the server nodes. The attacker picks one server, and manipulates any floating number by multiplying $-1e20$, with probability of $0.05\%$. We call this attack \textit{gambler}, because the attacker randomly manipulates the values, with the goal that in some iterations the assumptions/prerequisites of the robust aggregation rules are broken, which crashes the training. Such an attack requires less global information, and can be concentrated on one single server, which makes it more realistic and easier to implement.

In Figure~\ref{fig:mnist_mlp}(d), we evaluate the performance of all the robust aggregation rules under the gambler attack. The number of servers is $20$. For \texttt{Krum}, \texttt{Multi-Krum} and \texttt{Phocas}, we set the estimated Byzantine number $q=b=8$. Only \texttt{Phocas} and \texttt{Trmean} survive under this attack. The convergence is slightly slower than the averaging without Byzantine values, but the gaps are small. 

\subsubsection{Sensitivity to the hyperparameters}
We test the robustness to the estimated number of Byzantine workers $b$~($q$ for \texttt{Krum} and \texttt{Multi-Krum}) in this experiment. We show the maximal accuracy throughout the training. The results are shown in Figure~\ref{fig:sensitivity}(b). The performance of \texttt{Phocas} and \texttt{Trmean} does not significantly change when $b$ varies. 

\subsection{Discussion}
As expected, \texttt{Mean} aggregation is not Byzantine resilient. \texttt{Krum}, \texttt{Multi-Krum} are classic Byzantine-resilient but not dimensional Byzantine-resilient. \texttt{Phocas} and \texttt{Trmean} are dimensional Byzantine resilient. However, under omniscient attack, \texttt{Trmean} suffers from larger variances, which slow down the convergence. 

The gambler attack shows the true advantage of dimensional Byzantine resilience: higher probability of survival. Under such attack, chances are that the assumptions/prerequisites of \texttt{Phocas} and \texttt{Trmean} may still fail. However, their probability of crashing is less than the other algorithms because dimensional Byzantine resilience generalizes classic Byzantine resilience. An interesting observation is that \texttt{Trmean} is slightly better than \texttt{Phocas} under gambler attack. That is because the estimation of $q = b = 8$ is not accurate, which will cause some unpredictable behavior for \texttt{Phocas}. We choose $q = b = 8$ because it is the maximal value we can take for \texttt{Krum} and \texttt{Multi-Krum}. 

It is obvious that \texttt{Phocas} performs best in almost cases. \texttt{Multi-Krum} is also good, except that it is not dimensional Byzantine-resilient. The reason why \texttt{Phocas} and \texttt{Multi-Krum} have better performance is that they aggregate more candidates to stabilize the convergence. Note that \texttt{Phocas} not only performs just as well as or even better than \texttt{Multi-Krum}, but also has lower time complexity. 

\texttt{Trmean} has the cheapest computation. Its worst case, omniscient attack, is hard to implement in reality. Thus, for most applications, we suggest \texttt{Trmean} as an easy-to-implement aggregation rule with robust performance. However, if we assume the worst cases of the attacks/failures, \texttt{Phocas} should be adopted for best robustness.

\vspace{-0.15cm}
\section{Related work}

Our work is closely related to \citet{blanchard2017machine} and \citet{yin2018byzantine}. Another paper~\cite{chen2017distributed} proposed grouped geometric median for Byzantine resilience, with strongly convex functions. 

Our approach offers the following important advantages over the previous work.
\setitemize[0]{leftmargin=*}
\begin{itemize}
\item \textbf{Cheaper computation compared to Krum.} \texttt{Trmean} and \texttt{Phocas} have nearly linear time complexity, while the time complexity of Krum is $O(m^2 d)$.
\item \textbf{Dimensional Byzantine resilience.} \texttt{Trmean} and \texttt{Phocas} tolerate a more general type of Byzantine failures described in Equation~(\ref{equ:byz_model}) and Definition~\ref{def:dim_byz}, while \texttt{Krum} can only tolerate the classic Byzantine failures described in Equation~(\ref{equ:byz_worker}) and Definition~\ref{def:byz}.
\item \textbf{Simpler dimension-free convergence guarantees with fewer assumptions.} \citet{yin2018byzantine} also study the Byzantine resilience of the trimmed mean and its special case, median. However, in that work, the bounds grow with the number of dimensions $d$, even if the variance of gradients $V$ is fixed. To establish the bounds, the authors assume bounded domain, and sub-exponential gradients with bounded skewness, which are not required in our theoretical analysis. In this paper, we use fewer assumptions to prove the dimension-free theoretical guarantees of trimmed mean. 
\end{itemize}

The major contribution of this paper is a combination of theory and practice. First, we provide the theoretical guarantee of the convergence of the trimmed mean with fewer assumptions. Then, we propose a novel aggregation rule, \texttt{Phocas}, which has comparable theoretical guarantees, and comparable or even better performance in the experiments.

\section{Conclusion}
We investigate the generalized Byzantine resilience, and propose trimmed-mean-based aggregation rules for synchronous SGD. The algorithms have low time complexity and provable convergence. Our empirical results show good performance. We will study the Byzantine resilience in other scenarios such as asynchronous training in the future work.

\bibliography{byz}
\bibliographystyle{abbrvnat}

\newpage
\clearpage
\section{Appendix}
In the appendix, we introduce several useful lemmas and use them to derive the detailed proofs of the theorems in this paper.

\subsection{Dimensional Byzantine resilience}

\setcounter{lemma}{0}
\begin{lemma}[\citet{blanchard2017machine}]
Let $v_1, \ldots, v_m$ be any i.i.d. random $d$-dimensional vectors s.t. $v_i \sim G$, with $\E[G] = g$ and $\E\|G-g\|^2 \leq V$. $q$ of $\{\tilde{v}_i: i \in [m]\}$ are Byzantine. If $2q + 2 < m$, we have 
$
\E \|\krum(\{\tilde{v}_i: i \in [m]\}) - g \|^2 \leq \Delta_0,
$
where 
$
\Delta_0 = \left(6m-6q + \frac{4q(m-q-2) + 4q^2(m-q-1)}{m-2q-2} \right)V.
$
\end{lemma}
\begin{proof}
We denote the $(m-q)$ correct values as $\{v_1, \ldots, v_{m-q}\}$, and $Kr = \krum(\{\tilde{v}_i: i \in [m]\})$.
Using \citet{blanchard2017machine} Proposition 1, we have 
\begin{align*}
\E \left\|Kr - \frac{\sum_{Kr \rightarrow \mbox{correct } j} \tilde{v}_j}{|Kr \rightarrow \mbox{correct } j|}  \right\|^2 \leq 2\left(m-q + \frac{q(m-q-2) + q^2(m-q-1)}{m-2q-2} \right)V,
\end{align*}
where $Kr \rightarrow \mbox{correct } j$ is the set of correct elements in the $m-q-2$ nearest neighbours to $Kr$ in $\{\tilde{v}_i: i \in [m]\}$, measured by Euclidean distance.
Thus, we obtain 
\begin{align*}
& \E \|\krum(\{\tilde{v}_i: i \in [m]\}) - g \|^2 \\
&\leq 2\E \left\|Kr - \frac{\sum_{Kr \rightarrow \mbox{correct } j} \tilde{v}_j}{|Kr \rightarrow \mbox{correct } j|}  \right\|^2 + 2\E \left\|g - \frac{\sum_{Kr \rightarrow \mbox{correct } j} \tilde{v}_j}{|Kr \rightarrow \mbox{correct } j|}  \right\|^2 \\
&\leq 4\left(m-q + \frac{q(m-q-2) + q^2(m-q-1)}{m-2q-2} \right)V + 2 \frac{\sum_{Kr \rightarrow \mbox{correct } j} \E \left\|g -\tilde{v}_j  \right\|^2}{|Kr \rightarrow \mbox{correct } j|}  \\
&\leq 4\left(m-q + \frac{q(m-q-2) + q^2(m-q-1)}{m-2q-2} \right)V + 2 (m-q) V  \\
&= \left(6m-6q + \frac{4q(m-q-2) + 4q^2(m-q-1)}{m-2q-2} \right)V.
\end{align*}
\end{proof}

\setcounter{theorem}{0}
\setcounter{proposition}{0}
\begin{proposition}
Averaging is not dimensional Byzantine resilient.
\end{proposition}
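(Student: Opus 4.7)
The plan is to exhibit an explicit adversarial configuration that makes the squared error of the averaging aggregator unbounded, contradicting the uniform bound $\Delta$ required in Definition~\ref{def:dim_byz}. Since dimensional Byzantine resilience must hold against \emph{any} placement of up to $q$ corrupted values per dimension (with $q\ge 1$), it suffices to construct a single attack for which no constant $\Delta(m,q)$ can suffice.

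Concretely, I would fix an arbitrary coordinate $j\in[d]$ and let the attacker corrupt only the first entry of that coordinate, setting $(\tilde v_1)_j = c$ for a free scalar $c\in\R$, while leaving all other coordinates of all vectors uncorrupted (this uses only $1\le q$ Byzantine value per dimension, so the attack is admissible). Then the $j$-th coordinate of the averaged output is
\begin{align*}
\bigl(\mean(\{\tilde v_i\})\bigr)_j \;=\; \frac{1}{m}\Bigl(c + \sum_{i=2}^m (v_i)_j\Bigr),
\end{align*}
and all other coordinates equal the usual sample mean of the $v_i$'s. Taking the expected squared distance to $g$ and isolating the $j$-th coordinate gives a term of the form $\E\bigl[\bigl(\tfrac{c}{m} + \tfrac{1}{m}\sum_{i=2}^m (v_i)_j - g_j\bigr)^2\bigr]$, which by a simple variance/mean decomposition behaves like $\Theta(c^2/m^2)$ as $|c|\to\infty$, independently of the distribution $G$.

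Letting $c\to\infty$ therefore drives $\E\|\mean(\{\tilde v_i\})-g\|^2\to\infty$, so no constant $\Delta$ depending only on $m$ and $q$ can satisfy the inequality in Definition~\ref{def:dim_byz}. This proves averaging is not dimensional Byzantine resilient. The only subtlety worth being careful about is that the bound in Definition~\ref{def:dim_byz} is required to hold uniformly over all admissible Byzantine configurations, not just for a typical one; the proof simply makes that quantifier explicit by constructing a parametric family of attacks whose error grows without bound. No significant obstacle is expected beyond stating the construction cleanly.
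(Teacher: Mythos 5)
Your proof is correct, but it takes a different route from the paper's. The paper constructs an omniscient adversary: it leaves $v_1,\dots,v_{m-1}$ intact and replaces the last vector by $-g-\sum_{i=1}^{m-1}v_i$, so that the average collapses to exactly $-g/m$; it then argues that $\ip{\E[\mathrm{Aggr}]}{g}<0$, i.e.\ SGD is no longer a descent method, so it cannot converge to critical points. Your attack is more elementary and more local: corrupt a single scalar entry $(\tilde v_1)_j=c$ and let $c\to\infty$, which drives $\E\|\mathrm{Aggr}-g\|^2$ to infinity like $c^2/m^2$. Your argument engages Definition~\ref{def:dim_byz} more directly --- it literally exhibits an admissible family of corruptions over which the required uniform bound $\Delta(m,q)$ fails, whereas the paper's counterexample produces a deterministic output $-g/m$ whose squared error $(1+1/m)^2\|g\|^2$ is finite for each fixed $g$ and is ``unbounded'' only in the weaker sense of scaling with $\|g\|$ (or, as the paper frames it, in the sense of breaking the descent property). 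Your adversary also needs no knowledge of $g$ or of the other workers' gradients, which makes the counterexample strictly weaker in its assumptions on the attacker. What the paper's version buys in exchange is the explicit link to non-convergence of SGD (the negative inner product with the true gradient), which is the operational consequence the authors care about; if you wanted your writeup to match that emphasis you could add one sentence noting that an unbounded aggregation error also destroys any convergence guarantee. No gap in your argument.
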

\begin{proof}
We demonstrate a counter example.
Consider the case where
\begin{align}
\tilde{v}_i = 
\begin{cases}
v_i, &\forall i \in [m-1]\\
-g - \sum_{i=1}^{m-1} v_i, &i=m,
\end{cases}
\end{align}
where $g = \E[v_i]$, $\forall i \in [m]$. Thus, the resulting aggregation is $Aggr = -g/m$.
The inner product $\ip{\E[Aggr]}{g}$ is always negative under the Byzantine attack. Thus, SGD is not expectedly descendant, which means it will not converge to critical points. Note that in this counter example, the number of Byzantine values of each dimension is $1$.

Hence,  averaging is not dimensional Byzantine-resilient.
\end{proof}

\begin{proposition}
Any aggregation rule $\aggr(\{\tilde{v}_i: i \in [m]\})$ that outputs $Aggr \in \{\tilde{v}_i: i \in [m]\}$ is not dimensional Byzantine resilient.
\end{proposition}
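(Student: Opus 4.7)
The plan is to exhibit a dimension-wise corruption pattern in which every candidate vector contains at least one Byzantine coordinate. Since the aggregation rule is required to return one of the candidate vectors, its output must then inherit an adversarially chosen coordinate, which the attacker can drive to an arbitrarily large value, making the expected squared error unbounded. The key insight is that the dimensional Byzantine model allows the attacker to sprinkle corruptions across the matrix, so that although each column has only $q$ bad entries, every row can still be hit at least once whenever $d$ is large enough relative to $m$.

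Concretely, I would take the ambient dimension to satisfy $d \geq m$ (which is standard for the models considered) and fix $q \geq 1$. I would then construct $\{\tilde{v}_i : i \in [m]\}$ as follows: start from i.i.d.\ correct samples $v_i \sim G$, pick a single index $j_i \in [d]$ for each worker $i$ with all $j_i$ distinct (so that no column receives more than one corruption, which is within the per-dimension budget $q \geq 1$), and set $(\tilde{v}_i)_{j_i} = M$ for some huge constant $M$, leaving all other coordinates of $\tilde{v}_i$ equal to $(v_i)_{j}$. By construction, every $\tilde{v}_i$ has at least one corrupted coordinate, and the per-dimension Byzantine count is at most one, so the assumptions of Definition~\ref{def:dim_byz} are satisfied.

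Next I would appeal to the selection property: by hypothesis $\aggr(\{\tilde{v}_i: i \in [m]\}) = \tilde{v}_{k}$ for some (possibly input-dependent) index $k \in [m]$. Since $\tilde{v}_k$ contains the adversarial entry $M$ in coordinate $j_k$, one obtains
\begin{equation*}
\|\aggr(\{\tilde{v}_i: i \in [m]\}) - g\|^2 \;\geq\; \bigl((\tilde{v}_k)_{j_k} - g_{j_k}\bigr)^2 \;=\; (M - g_{j_k})^2.
\end{equation*}
Taking expectations and then letting $M \to \infty$ shows that $\E\|\aggr - g\|^2$ cannot be bounded by any constant depending only on $m$ and $q$, which contradicts Definition~\ref{def:dim_byz}.

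I do not anticipate a genuine obstacle here; the argument is essentially a pigeonhole/adversarial-placement observation. The one spot that deserves care is verifying that the adversarial placement is consistent with the per-dimension budget: by choosing the $j_i$ distinct across workers we use only one Byzantine entry per affected column, so $q \geq 1$ suffices. If one wanted to cover the degenerate case $d < m$, a cyclic placement scheme still hits every row while keeping at most $\lceil m/d \rceil$ corruptions per column, which fits the budget whenever $q \geq \lceil m/d \rceil$; the paper's setting of high-dimensional gradients makes the $d \geq m$ case the relevant one.
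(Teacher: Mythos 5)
Your construction is essentially identical to the paper's: the paper corrupts the $i$th coordinate of the $i$th vector (so each dimension has at most one Byzantine entry, yet every candidate is hit), and concludes that whichever vector is selected carries an adversarial coordinate. Your argument is correct; if anything, bounding $\E\|\aggr - g\|^2$ from below directly is more faithful to Definition~\ref{def:dim_byz} than the paper's phrasing in terms of the inner product $\ip{\E[Aggr]}{g}$ becoming arbitrarily negative.
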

\begin{proof}
We demonstrate a counter example.
Consider the case where the $i$th dimension of the $i$th vector $v_i$ is manipulated by the malicious workers (e.g. multiplied by an arbitrarily large negative value), where $i \in [m]$. Thus, up to 1 value of each dimension is Byzantine. However, no matter which vector is chosen, as long as the aggregation is chosen from $\{\tilde{v}_i: i \in [m]\}$, the inner product $\ip{\E[Aggr]}{g}$ can be arbitrarily large negative value under the Byzantine attack. Thus, SGD is not expectedly descendant, which means it will not converge to critical points.

Hence,  any aggregation rule that outputs $Aggr \in \{\tilde{v}_i: i \in [m]\}$ is not dimensional Byzantine-resilient.
\end{proof}

\subsection{Trimmed mean}
We use the following lemma to bound the one-dimensional trimmed mean.

\begin{lemma}
Assume that among the scalar sequence $\{\tilde{v}_i: i \in [m]\}$, $q$ elements are Byzantine. Without loss of generality, we denote the remaining correct values as $\{v_1, \ldots, v_{m-q}\}$. Thus, for $q < b \leq \lceil m/2 \rceil - 1$, $v_{(b-q+i):(m-q)} \leq \tilde{v}_{(b+i):m} \leq v_{(b+i):(m-q)}$, for $\forall i \in [m-2b]$, where $\tilde{v}_{(b+i):m}$ is the $(b+i)$th smallest element in $\{\tilde{v}_i: i \in [m]\}$, and $v_{(b+i):(m-q)}$ is the $(b+i)$th smallest element in $\{v_1, \ldots, v_{m-q}\}$.
\label{lem:bound_ord_stat}
\end{lemma}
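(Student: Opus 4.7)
The plan is to handle the upper and lower bounds separately through a simple counting argument. The key observation is that injecting $q$ Byzantine scalars into an otherwise correct sample of size $m-q$ can shift any order statistic by at most $q$ positions, and the desired bounds just quantify this shift.

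First I would establish the upper bound $\tilde{v}_{(b+i):m} \leq v_{(b+i):(m-q)}$. By the definition of the order statistics on the correct sample, there are exactly $b+i$ correct values less than or equal to $v_{(b+i):(m-q)}$. Because every correct value also appears in the multiset $\{\tilde{v}_j: j \in [m]\}$, the number of elements of $\{\tilde{v}_j\}$ that are $\leq v_{(b+i):(m-q)}$ is at least $b+i$, regardless of where the Byzantine scalars are placed. Consequently the $(b+i)$th smallest element of $\{\tilde{v}_j\}$ cannot be strictly larger than $v_{(b+i):(m-q)}$, which is exactly the upper bound.

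For the lower bound $v_{(b-q+i):(m-q)} \leq \tilde{v}_{(b+i):m}$ I would use the dual count. There are $(m-q)-(b-q+i)+1 = m-b-i+1$ correct values that are $\geq v_{(b-q+i):(m-q)}$; these again lie in $\{\tilde{v}_j\}$, so at most $m - (m-b-i+1) = b+i-1$ of the $\tilde{v}_j$'s can be strictly smaller than $v_{(b-q+i):(m-q)}$. Hence the $(b+i)$th smallest $\tilde{v}$ must satisfy $\tilde{v}_{(b+i):m} \geq v_{(b-q+i):(m-q)}$, completing the sandwich.

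The main obstacle is merely checking that the indices are in range, i.e.\ that both $v_{(b-q+i):(m-q)}$ and $v_{(b+i):(m-q)}$ are well defined for every $i \in [m-2b]$. This needs $1 \leq b-q+i$ and $b+i \leq m-q$, which follow respectively from $b > q$ (so $b-q+i \geq b-q+1 \geq 1$) and from $i \leq m-2b$ combined with $b \geq q$ (so $b+i \leq m-b \leq m-q$). With these index conditions verified, the argument is purely combinatorial and requires no probabilistic input; the probabilistic content will enter later, when this deterministic sandwich is plugged into the variance bound in Theorem~\ref{thm:bound_trmean_var}.
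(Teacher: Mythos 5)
Your proof is correct and is essentially the paper's argument: the same order-statistic counting/pigeonhole bound, merely stated in direct (contrapositive) form where the paper phrases each inequality as a proof by contradiction. The explicit index-range check is a welcome addition but does not change the substance.
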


\begin{proof}
We prove the two inequalities separately. \\
(i) We prove the first inequality $v_{(b-q+i):(m-q)} \leq \tilde{v}_{(b+i):m}$ by contradiction. \\
If $v_{(b-q+i):(m-q)} > \tilde{v}_{(b+i):m}$, then there will be $\left( (m-q) - (b-q+i) + 1 \right) = (m-b-i+1)$ correct values larger than $\tilde{v}_{(b+i):m}$. However, because $\tilde{v}_{(b+i):m}$ is the $(b+i)$-th smallest element in the sequence 
$\{\tilde{v}_i: i \in [m]\}$, there is at most $\left( m - (b+i) \right) = (m - b - i)$ elements larger than $\tilde{v}_{(b+i):m}$, which yields a contradiction. 

(ii) We prove the second inequality $\tilde{v}_{(b+i):m} \leq v_{(b+i):(m-q)}$ by contradiction. \\
If $\tilde{v}_{(b+i):m} > v_{(b+i):(m-q)}$, then there will be $(b+i)$ correct values smaller than $\tilde{v}_{(b+i):m}$. However, because $\tilde{v}_{(b+i):m}$ is the $(b+i)$-th smallest element in the sequence 
$\{\tilde{v}_i: i \in [m]\}$, there is at most $(b+i-1)$ elements smaller than $\tilde{v}_{(b+i):m}$, which yields a contradiction. 
\end{proof}

\begin{theorem}(Bounded Variance)
Let $v_1, \ldots, v_m$ be any i.i.d. random $d$-dimensional vectors s.t. $v_i \sim G$, with $\E[G] = g$ and $\E\|G-g\|^2 \leq V$. In each dimension, $q$ values are Byzantine, which yields $\{\tilde{v}_i: i \in [m]\}$. If $2q < m$, we have 
$
\E \|\trmean_b(\{\tilde{v}_i: i \in [m]\}) - g \|^2 \leq \Delta_1,
$
where 
$
\Delta_1 = \frac{2(b+1)(m-q)}{(m-b-q)^2} V.
$
\end{theorem}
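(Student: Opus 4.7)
The plan is to reduce the Euclidean bound to a per-coordinate scalar bound, apply Lemma~\ref{lem:bound_ord_stat} to sandwich the contaminated trimmed mean between two trimmed means built from only the correct values, and then control each clean piece via an i.i.d.\ variance calculation. Since $\trmean_b$ acts coordinate-wise and $\E\|G-g\|^2 = \sum_{j=1}^d \E[(G_j-g_j)^2] \leq V$, it suffices to prove, for each fixed coordinate $j$, the scalar bound $\E[(\trmean_b(\{(\tilde v_i)_j\}) - g_j)^2] \leq \frac{2(b+1)(m-q)}{(m-b-q)^2} V_j$ with $V_j := \E[(G_j-g_j)^2]$, and then sum over $j$ using $\sum_j V_j \leq V$.

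Fixing a coordinate, I write $n := m-q$ and let $v_1,\ldots,v_n$ denote the correct scalars (i.i.d., mean $g_j$, variance at most $V_j$). Applying Lemma~\ref{lem:bound_ord_stat} to each $i \in [m-2b]$ and averaging yields the sandwich $A \leq \trmean_b \leq B$, where
\begin{align*}
A = \frac{1}{m-2b}\sum_{i=1}^{m-2b} v_{(b-q+i):n},\qquad B = \frac{1}{m-2b}\sum_{i=1}^{m-2b} v_{(b+i):n}
\end{align*}
are themselves (asymmetric) trimmed means of the clean sample alone. Since $\trmean_b - g_j$ lies in $[A-g_j,B-g_j]$, an elementary case split on the position of $g_j$ relative to this interval gives $(\trmean_b - g_j)^2 \leq (A-g_j)^2 + (B-g_j)^2$, reducing the task to bounding the two purely clean quantities.

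For each clean piece $\bar v \in \{A,B\}$, the target is $\E[(\bar v - g_j)^2] \leq \frac{(b+1)(m-q)}{(m-b-q)^2} V_j$. The key observation is that each of $A$ and $B$ is itself sandwiched between two explicit clean order statistics whose positions are $m-2b$ apart in the length-$n$ sorted sequence: when $\bar v$ overshoots $g_j$, the upper endpoint (e.g.\ $v_{(m-b):n}$ for $B$) forces at least $m-b-q$ correct samples to exceed $g_j$ by the same amount, and symmetrically when $\bar v$ undershoots, the lower endpoint (e.g.\ $v_{(b+1):n}$) forces at least $b+1$ correct samples to fall short. Amortizing the squared excess over these many independent samples and invoking $\E[(v_i-g_j)^2] \leq V_j$ yields the stated per-coordinate bound, after which summing over $j \in [d]$ delivers the theorem.

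The main obstacle is the sharp constant in the last step: a naive Jensen/Cauchy--Schwarz estimate on the $m-2b$ order statistics only produces the much looser bound $\frac{m-q}{m-2b} V_j$, far weaker than the claimed $\frac{(b+1)(m-q)}{(m-b-q)^2} V_j$. Recovering the stated constant requires the two-sided amortization above, which exploits the monotone structure of the trimmed mean and the asymmetric coverage $(b+1)$ on one side versus $(m-b-q)$ on the other; careful bookkeeping of which clean samples must participate on each side, and verifying that the combination of the two case bounds balances to $\frac{(b+1)(m-q)}{(m-b-q)^2}$ rather than the looser max of the two, is the delicate part of the proof.
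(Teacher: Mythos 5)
Your skeleton matches the paper's: reduce coordinate-wise, use Lemma~\ref{lem:bound_ord_stat} to sandwich the contaminated trimmed mean between two averages $A\le\trmean_b\le B$ of clean order statistics, and finish with an i.i.d.\ variance computation. The divergence---and the gap---is in the final quantitative step. You replace the paper's bound $(\trmean_b-g_j)^2\le\max\{(A-g_j)^2,(B-g_j)^2\}$ by the sum $(A-g_j)^2+(B-g_j)^2$, which then obliges you to prove the per-piece bound $\E[(\bar v-g_j)^2]\le\frac{(b+1)(m-q)}{(m-b-q)^2}V_j$, a factor of $2$ sharper than what the paper establishes for each piece. You never actually prove this: the ``two-sided amortization'' is a heuristic, not an argument, and its counting is wrong as stated. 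If $B-g_j=t>0$, the only order statistics forced to be at least $g_j+t$ are those at or above the largest one entering $B$, namely $v_{(m-b):n},\dots,v_{n:n}$ with $n=m-q$---that is $b-q+1$ samples, not $m-b-q$ of them. So the step you yourself flag as ``the delicate part'' is precisely the step that is missing, and the route you sketch for it does not go through.

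The paper's mechanism for the per-piece bound is concrete and you should adopt it: first enlarge each clean average over $m-2b$ order statistics to an average over $m-b-q$ extreme order statistics (the average of the top $m-2b$ elements of a set of $m-b-q$ elements is at least the average of the whole set, and symmetrically on the other side); then write that enlarged average as $\frac{1}{m-b-q}\bigl(\sum_{i\in[m-q]}(v_i-g_j)-\sum_{i\notin\mathcal{S}}(v_i-g_j)\bigr)$ with $|\mathcal{S}^c|=b$, apply $(x-y)^2\le 2x^2+2y^2$, bound the first term by independence ($\E[(\sum_i(v_i-g_j))^2]\le(m-q)V_j$) and the second by Cauchy--Schwarz over the $b$ excluded terms. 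The ``$1$'' and the ``$b$'' in the factor $(b+1)$ come from exactly these two terms, and the result is $\frac{2(b+1)(m-q)}{(m-b-q)^2}V_j$ per piece. Note that combining this with your (cleaner, and actually valid) sum-of-squares step yields $2\Delta_1$ rather than $\Delta_1$; the paper lands on $\Delta_1$ only by passing the expectation inside the max. So to reach the stated constant along your route you would need a genuinely sharper per-piece estimate than either you or the paper supplies.
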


\begin{proof}
We first assume that all the $v_i$'s, $\tilde{v}_i$'s, and $g$ are scalars, with the variance $V = \sigma^2$.
Using Lemma~\ref{lem:bound_ord_stat}, we have
\begin{align*}
\quad&\sum_{i=b-q+1}^{m-q-b} (v_{i:(m-q)}-g) \leq \sum_{i=b+1}^{m-b} (\tilde{v}_{i:m}-g) \leq \sum_{i=b+1}^{m-b} (v_{i:(m-q)}-g) \\
\Rightarrow & \frac{\sum_{i=1}^{m-q-b} (v_{i:(m-q)}-g)}{m-b-q} \leq \frac{\sum_{i=b+1}^{m-b} (\tilde{v}_{i:m}-g)}{m-2b} \leq \frac{\sum_{i=b+1}^{m-q} (v_{i:(m-q)}-g)}{m-b-q} \\
\Rightarrow& \left[ \frac{\sum_{i=b+1}^{m-b} (\tilde{v}_{i:m}-g)}{m-2b} \right]^2 \leq \max \left\{\left[ \frac{\sum_{i=1}^{m-q-b} (v_{i:(m-q)}-g)}{m-b-q} \right]^2, \left[ \frac{\sum_{i=b+1}^{m-q} (v_{i:(m-q)}-g)}{m-b-q} \right]^2 \right\}.
\end{align*}
Thus, we have 
\begin{align*}
&\left[ \trmean_b(\{\tilde{v}_i: i \in [m]\}) - g \right]^2 \\
&= \left[ \frac{\sum_{i=b+1}^{m-b} \tilde{v}_{i:m}}{m-2b} - g \right]^2 \\
&\leq \max \left\{\left[ \frac{\sum_{i=1}^{m-q-b} (v_{i:(m-q)}-g)}{m-b-q} \right]^2, \left[ \frac{\sum_{i=b+1}^{m-q} (v_{i:(m-q)}-g)}{m-b-q} \right]^2 \right\}.
\end{align*}
Note that for arbitrary subset $\mathcal{S} \subseteq [m-q]$, $|\mathcal{S}| = m-b-q$, we have the following bound:
\begin{align*}
&\left[ \frac{\sum_{i \in \mathcal{S}} (v_{i:(m-q)}-g)}{m-b-q } \right]^2 \\
&= \left[ \frac{\sum_{i \in [m-q]} (v_{i:(m-q)}-g) - \sum_{i \notin \mathcal{S}} (v_{i:(m-q)}-g)}{m-b-q} \right]^2 \\
&\leq 2\left[ \frac{\sum_{i \in [m-q]} (v_{i:(m-q)}-g)}{m-b-q} \right]^2 + 2\left[ \frac{\sum_{i \notin \mathcal{S}} (v_{i:(m-q)}-g)}{m-b-q} \right]^2 \\
&= \frac{2(m-q)^2}{(m-b-q)^2} \left[ \frac{\sum_{i \in [m-q]} (v_{i:(m-q)}-g)}{m-q} \right]^2 + \frac{2b^2}{(m-b-q)^2} \left[ \frac{\sum_{i \notin \mathcal{S}} (v_{i:(m-q)}-g)}{b} \right]^2 \\
&\leq \frac{2(m-q)^2}{(m-b-q)^2} \left[ \frac{\sum_{i \in [m-q]} (v_{i:(m-q)}-g)}{m-q} \right]^2 
+ \frac{2b^2}{(m-b-q)^2} \frac{\sum_{i \notin \mathcal{S}} (v_{i:(m-q)}-g)^2}{b}  \\
&\leq \frac{2(m-q)^2}{(m-b-q)^2} \left[ \frac{\sum_{i \in [m-q]} (v_{i:(m-q)}-g)}{m-q} \right]^2 
+ \frac{2b^2}{(m-b-q)^2} \frac{\sum_{i \in [m-q]} (v_{i:(m-q)}-g)^2}{b}.
\end{align*}
By taking the expectations, we obtain 
\begin{align*}
&\E \left[ \frac{\sum_{i \in \mathcal{S}} (v_{i:(m-q)}-g)}{m-b-q} \right]^2 \\
&\leq \frac{2(m-q)^2}{(m-b-q)^2} \frac{\sigma^2}{m-q} 
+ \frac{2b^2}{(m-b-q)^2} \frac{(m-q)\sigma^2}{b} \\
&= \frac{2(m-q) \sigma^2}{(m-b-q)^2} 
+ \frac{2b(m-q)\sigma^2}{(m-b-q)^2} \\
&= \frac{2(b+1)(m-q) \sigma^2}{(m-b-q)^2}. 
\end{align*}
Combining all the ingredients above, we obtain the desired result:
\begin{align*}
\E \left[ \trmean_b(\{\tilde{v}_i: i \in [m]\}) - g \right]^2 \leq \frac{2(b+1)(m-q) \sigma^2}{(m-b-q)^2}.
\end{align*}
Then, we generalize  $v_i$'s, $\tilde{v}_i$'s, and $g$ to $d$-dimensional vectors with the variance $V = \sum_{j=1}^d \sigma_j^2$, where $\E\|G_j - g_j\|^2 \leq \sigma_j^2$, $\forall j \in [d]$. For $\forall j \in [d]$, we have
\begin{align*}
\E \left[ \trmean_b(\{(\tilde{v}_i)_j: i \in [m]\}) - g_j \right]^2 \leq \frac{2(b+1)(m-q) \sigma_j^2}{(m-b-q)^2}.
\end{align*}
Thus, we have
\begin{align*}
\E \left[ \trmean_b(\{\tilde{v}_i: i \in [m]\}) - g \right]^2 \leq \frac{2(b+1)(m-q) \sum_{j=1}^d \sigma_j^2}{(m-b-q)^2} = \frac{2(b+1)(m-q)}{(m-b-q)^2} V.
\end{align*}
\end{proof}

\subsection{Phocas}

The following lemma bounds the one-dimensional $Phocas_b(\cdot)$.
\begin{lemma}
For the scalar sequence $\{\tilde{v}_i \in [m]\}$, and the corresponding trimmed mean $Trmean_b = \trmean_b(\{\tilde{v}_i \in [m]\})$, we have 
\begin{align*}
|\tilde{v}_{i/Trmean_b} - Trmean_b| \leq |v_{i/Trmean_b} - Trmean_b|,
\end{align*}
where $i \in [m-q]$, $\{v_i: i \in [m-q]\}$ is the sequence of $(m-q)$ correct values in $\{\tilde{v}_i \in [m]\}$.
\end{lemma}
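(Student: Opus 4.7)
The plan is to prove this by a short counting argument about order statistics. The crucial structural fact is that $\{v_i : i \in [m-q]\}$ is a sub-multiset of $\{\tilde{v}_i : i \in [m]\}$, obtained by removing the $q$ Byzantine entries, and both sequences are ranked by distance to the \emph{same} reference value $Trmean_b$. Intuitively, enlarging a multiset can only make the $i$-th order statistic of ``distance to a fixed center'' weakly smaller, since any extra elements either displace a correct value from the top-$i$ nearest (necessarily by being at least as close) or leave the top-$i$ list unchanged. No distributional assumption on the correct or Byzantine values is required; the lemma is purely combinatorial.

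I would carry this out as follows. Fix $i \in [m-q]$, and let $r = |v_{i/Trmean_b} - Trmean_b|$ denote the $i$-th smallest distance to $Trmean_b$ among the $m-q$ correct values. By the definition of the order statistic, at least $i$ indices $j \in [m-q]$ satisfy $|v_j - Trmean_b| \leq r$. Each such correct value $v_j$ also appears in the full multiset $\{\tilde{v}_k : k \in [m]\}$, so the full multiset contains at least $i$ elements whose distance to $Trmean_b$ is at most $r$. Hence the $i$-th smallest such distance over the full multiset is also at most $r$, i.e.\ $|\tilde{v}_{i/Trmean_b} - Trmean_b| \leq r$, which is exactly the inequality claimed.

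The only thing to be careful about is tie-breaking when several elements are equidistant from $Trmean_b$. I would fix any deterministic tie-breaking rule (for example lexicographic on the original index) and use the same rule when ordering both the full set $\{\tilde{v}_k\}$ and its correct sub-sequence $\{v_j\}$, so that the ``at least $i$'' count in the previous paragraph is unambiguous. With that convention, the argument above goes through verbatim. There is no real obstacle here---the lemma is a monotonicity property of order statistics under multiset enlargement---so I do not expect any step to require more than the counting bookkeeping just described.
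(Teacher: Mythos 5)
Your proof is correct and uses essentially the same argument as the paper: the paper phrases it as a proof by contradiction (if the inequality failed, there would be $i$ correct values closer to $Trmean_b$ than $\tilde{v}_{i/Trmean_b}$, contradicting its definition as the $i$th closest element of the full set), which is just the contrapositive of your direct counting step. Your explicit handling of tie-breaking is a minor refinement the paper omits, but the core subset-counting idea is identical.
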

\begin{proof}
We prove this lemma by contradiction.

Assume that $|\tilde{v}_{i/Trmean_b} - Trmean_b| > |v_{i/Trmean_b} - Trmean_b|$. Thus, there are $i$ correct values closer than $\tilde{v}_{i/Trmean_b}$ to $Trmean_b$. However, according to the definition of $\tilde{v}_{i/Trmean_b}$, it is the $i$th closest value to $Trmean_b$, which means that there are at most $(i-1)$ values closer than $\tilde{v}_{i/Trmean_b}$ to $Trmean_b$, which yields a contradiction.
\end{proof}

\begin{theorem}(Bounded Variance)
Let $v_1, \ldots, v_n$ be any i.i.d. random $d$-dimensional vectors s.t. $v_i \sim G$, with $\E[G] = g$ and $\E\|G-g\|^2 \leq V$. In each dimension, $q$ values are Byzantine, which yields $\{\tilde{v}_i: i \in [m]\}$ If $2q < n$, we have 
$
\E \|\phocas_b(\{\tilde{v}_i: i \in [m]\}) - g \|^2 \leq \Delta_2,
$
where 
$
\Delta_2 = \left[4 + \frac{12(b+1)(m-q)}{(m-b-q)^2} \right] V.
$
\end{theorem}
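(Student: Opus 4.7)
The plan is to reduce to the scalar, single-coordinate case and then bound $\phocas_b$ against $\trmean_b$, chaining to $g$ via Theorem~\ref{thm:bound_trmean_var}. Since $\phocas_b$ is applied coordinate-wise and both the hypothesis $\E\|G-g\|^2 \leq V$ and the target bound decompose across coordinates (with per-coordinate variances $\sigma_j^2$ summing to at most $V$), it suffices to prove a one-dimensional version and then sum over the $d$ coordinates.

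For a single coordinate, I would start from the decomposition $\phocas_b - g = (\phocas_b - \trmean_b) + (\trmean_b - g)$ together with $(a+b)^2 \leq 2a^2 + 2b^2$, which reduces the problem to bounding $\E(\phocas_b - \trmean_b)^2$; the residual $\E(\trmean_b - g)^2$ is already controlled by the per-coordinate version of Theorem~\ref{thm:bound_trmean_var}. Expanding $\phocas_b$ and applying Jensen's inequality gives
\[
(\phocas_b - \trmean_b)^2 \leq \frac{1}{m-b}\sum_{i=1}^{m-b}(\tilde{v}_{i/\trmean_b} - \trmean_b)^2,
\]
after which I would invoke the preceding lemma (which implicitly requires $m-b \leq m-q$, i.e.\ $b \geq q$, matching the trimmed-mean regime) to replace each $\tilde{v}_{i/\trmean_b}$ by the $i$th nearest correct value $v_{i/\trmean_b}$, whose distance from $\trmean_b$ dominates.

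The crucial step is comparing the sum over the $m-b$ closest correct values to the full sum over all $m-q$ correct values. A short order-statistics inequality --- for any non-negative sequence of length $n$, the sum of the $k$ smallest entries is at most $\tfrac{k}{n}$ times the total --- applied to $\{(v_i - \trmean_b)^2 : i \in [m-q]\}$ with $k = m-b$ yields
\[
\sum_{i=1}^{m-b}(v_{i/\trmean_b} - \trmean_b)^2 \leq \frac{m-b}{m-q}\sum_{i=1}^{m-q}(v_i - \trmean_b)^2.
\]
The factor $\tfrac{m-b}{m-q}$ cancels the $\tfrac{1}{m-b}$ from Jensen, collapsing the bound to $\tfrac{1}{m-q}\sum_{i=1}^{m-q}(v_i - \trmean_b)^2$. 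Splitting once more via $(v_i - \trmean_b)^2 \leq 2(v_i - g)^2 + 2(\trmean_b - g)^2$ and taking expectations gives, per coordinate, $\E(\phocas_b - \trmean_b)^2 \leq 2\sigma_j^2 + 2\Delta_1^{(j)}$, where $\Delta_1^{(j)}$ is the per-coordinate trimmed-mean bound.

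Plugging back into the top-level decomposition yields $\E(\phocas_b - g)^2 \leq 4\sigma_j^2 + 6\Delta_1^{(j)}$ per coordinate, and summing over $j$ produces $\E\|\phocas_b - g\|^2 \leq 4V + 6\Delta_1 = \bigl(4 + \tfrac{12(b+1)(m-q)}{(m-b-q)^2}\bigr)V = \Delta_2$. The main obstacle is extracting the constants sharply: a crude bound that replaces the $m-b$ closest correct values by all $m-q$ correct values without the order-statistics partial-sum comparison picks up a factor $\tfrac{m-q}{m-b}$ rather than a cancellation, and overshoots the stated constants by roughly a factor of two. The partial-sum inequality for order statistics is precisely the ingredient that yields the clean $4$ and $12$.
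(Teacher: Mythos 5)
Your proposal matches the paper's proof step for step: the same $2a^2+2b^2$ decomposition through $\trmean_b$, the same Jensen step, the same use of the preceding lemma to replace Byzantine values by correct ones, the same partial-sum comparison from the $m-b$ closest correct values to all $m-q$ correct values (which the paper applies implicitly when it passes from $\frac{1}{m-b}\sum_{i=1}^{m-b}$ to $\frac{1}{m-q}\sum_{i=1}^{m-q}$), and the same final split yielding $4\sigma^2 + 6\Delta_1$ per coordinate. Your explicit remarks on where the order-statistics inequality is needed and on the implicit requirement $b \geq q$ are accurate refinements of what the paper leaves tacit.
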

\begin{proof}
We first assume that all the $v_i$'s, $\tilde{v}_i$'s, and $g$ are scalars, with the variance $V = \sigma^2$. 
For convenience, we denote $Trmean_b = \trmean_b(\{\tilde{v}_i \in [m]\})$. 
Thus, we have
\begin{align*}
& \left[ \phocas_b(\{\tilde{v}_i: i \in [m]\}) - g \right]^2 \\
&\leq 2\left[ \phocas_b(\{\tilde{v}_i: i \in [m]\}) - Trmean_b \right]^2 + 2\left[ g - Trmean_b \right]^2.
\end{align*}
Using Theorem~\ref{thm:bound_trmean_var}, we already have 
\begin{align*}
\E\left[ g - Trmean_b \right]^2 \leq  \frac{2(b+1)(m-q) \sigma^2}{(m-b-q)^2}.
\end{align*}
We only need to bound $\left[ \phocas_b(\{\tilde{v}_i: i \in [m]\}) - Trmean_b \right]^2$ as follows:
\begin{align*}
& \left[ \phocas_b(\{\tilde{v}_i: i \in [m]\}) - Trmean_b \right]^2 \\
&= \left[ \frac{\sum_{i=1}^{m-b} \tilde{v}_{i/Trmean_b} - Trmean_b}{m-b} \right]^2 \\
&\leq \frac{\sum_{i=1}^{m-b} \left[ \tilde{v}_{i/Trmean_b} - Trmean_b \right]^2}{m-b} \\
&\leq \frac{\sum_{i=1}^{m-b} \left[ v_{i/Trmean_b} - Trmean_b \right]^2}{m-b} \\
&\leq \frac{\sum_{i=1}^{m-q} \left[ v_{i/Trmean_b} - Trmean_b \right]^2}{m-q} \\
&= \frac{\sum_{i=1}^{m-q} \left[ v_i - Trmean_b \right]^2}{m-q} \\
&\leq \frac{\sum_{i=1}^{m-q} 2\left[ v_i - g \right]^2}{m-q} + \frac{\sum_{i=1}^{m-q} 2\left[ g - Trmean_b \right]^2}{m-q} \\
&\leq \frac{\sum_{i=1}^{m-q} 2\left[ v_i - g \right]^2}{m-q} + 2\left[ g - Trmean_b \right]^2.
\end{align*}
Taking expectation on both sides, we have 
\begin{align*}
\E\left[ \phocas_b(\{\tilde{v}_i: i \in [m]\}) - Trmean_b \right]^2 
\leq 2\sigma^2 + 2\E\left[ g - Trmean_b \right]^2.
\end{align*}
Thus, we have
\begin{align*}
& \E\left[ \phocas_b(\{\tilde{v}_i: i \in [m]\}) - g \right]^2 \\
&\leq 2\E\left[ Phocas_b(\{\tilde{v}_i: i \in [m]\}) - Trmean_b \right]^2 + 2\E\left[ g - Trmean_b \right]^2 \\
&\leq 4\sigma^2 + 6\E\left[ g - Trmean_b \right]^2 \\
& \leq 4\sigma^2 + \frac{12(b+1)(m-q) \sigma^2}{(m-b-q)^2}.
\end{align*}
Then, generalizing  $v_i$'s, $\tilde{v}_i$'s, and $g$ to $d$-dimensional vectors with the variance $V$, we obtain the desired result
\begin{align*}
\E\left[ \phocas_b(\{\tilde{v}_i: i \in [m]\}) - g \right]^2 \leq \left[4 + \frac{12(b+1)(m-q)}{(m-b-q)^2} \right] V.
\end{align*}
\end{proof}

\subsection{Convergence analysis}

\subsubsection{Strongly convex functions}

The following lemma is from \cite{bubeck2015convex}.
\begin{lemma}
\label{lem:bubeck_cvx}
Let $F$ be $\mu_F$-strongly convex and $L_F$-smooth. Then for all $x, y \in \R^d$, one has
\begin{align*}
\ip{\nabla F(x) - \nabla F(y)}{x-y} \geq \frac{\mu_F L_F}{\mu_F + L_F} \|x-y\|^2 + \frac{1}{\mu_F + L_F} \|\nabla F(x) - \nabla F(y)\|^2.
\end{align*}
\end{lemma}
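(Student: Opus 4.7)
The plan is to reduce this combined strong-convexity/smoothness estimate to the classical co-coercivity inequality for convex smooth functions via the shifted potential $h(x) = F(x) - \frac{\mu_F}{2}\|x\|^2$. Since $F$ is $\mu_F$-strongly convex, $h$ is convex; and since $F$ is $L_F$-smooth, a direct check shows $h$ is $(L_F - \mu_F)$-smooth (intuitively, subtracting the lower Hessian bound also shrinks the upper bound by the same amount). This step, while elementary, is where I would be most careful, because ``$L_F$-smooth'' as stated in the paper is the descent-lemma form rather than a Lipschitz-gradient condition; the equivalence of the two under convexity is what makes the reduction clean.

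Next, I would invoke Baillon--Haddad co-coercivity: any convex, $L$-smooth function $h$ satisfies $\ip{\nabla h(x) - \nabla h(y)}{x-y} \geq \frac{1}{L}\|\nabla h(x) - \nabla h(y)\|^2$. Plugging in $\nabla h(z) = \nabla F(z) - \mu_F z$ and $L = L_F - \mu_F$ produces an inequality involving three quantities, $A := \ip{\nabla F(x) - \nabla F(y)}{x-y}$, $B := \|\nabla F(x) - \nabla F(y)\|^2$, and $C := \|x-y\|^2$. Expanding $\|\nabla h(x) - \nabla h(y)\|^2 = B - 2\mu_F A + \mu_F^2 C$ and clearing the denominator yields $(L_F - \mu_F)(A - \mu_F C) \geq B - 2\mu_F A + \mu_F^2 C$.

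The remaining work is pure algebra: collect the $A$-terms on the left and the $B,C$-terms on the right. The coefficient of $A$ collapses to $L_F + \mu_F$, while on the right the $C$-terms combine as $\mu_F^2 C + (L_F - \mu_F)\mu_F C = \mu_F L_F\, C$, and dividing by $L_F + \mu_F$ gives exactly the claimed bound. The main (and really only) obstacle is the degenerate case $\mu_F = L_F$, which formally makes the $1/(L_F - \mu_F)$ factor in co-coercivity undefined; here $h$ is affine, so $\nabla F(z) = \mu_F z + c$, and both sides of the target inequality reduce to $\mu_F \|x-y\|^2$ by direct substitution. Taking this limit (or handling it separately by continuity in $\mu_F$) closes the argument.
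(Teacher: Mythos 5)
Your proof is correct: the reduction to Baillon--Haddad co-coercivity via $h(x) = F(x) - \frac{\mu_F}{2}\|x\|^2$, the algebra collapsing $(L_F-\mu_F)(A-\mu_F C) \geq B - 2\mu_F A + \mu_F^2 C$ to the claimed bound, and the separate treatment of the degenerate case $\mu_F = L_F$ are all sound; your worry about the descent-lemma form of smoothness is also harmless, since that quadratic upper bound transfers directly to $h$ with constant $L_F - \mu_F$ and is all that the standard proof of co-coercivity uses. The paper itself gives no proof of this lemma --- it is quoted from \citet{bubeck2015convex} --- and your argument is precisely the standard proof of the corresponding result (Lemma 3.11) in that reference, so it matches the cited source's approach.
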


\begin{theorem}
Assume that $F(x)$ is $\mu_F$-strongly convex and $L_F$-smooth, where $0 < \mu_F \leq L_F$. We take $\gamma \leq \frac{2}{\mu_F + L_F}$. In any iteration $t$, the correct gradients are $v_i^t = \nabla F_i(x^t)$. Using any (classic or dimensional) $\Delta$-Byzantine-resilient aggregation rule with corresponding assumptions, we obtain linear convergence with a constant error after $T$ iterations with synchronous SGD:
\begin{align*}
\E\|x^{T} - x^*\| \leq \left( 1- \frac{\gamma \mu_F L_F}{\mu_F + L_F} \right)^T \|x^0-x^*\| + \frac{\mu_F + L_F}{\mu_F L_F} \gamma  \sqrt{\Delta}.
\end{align*}
\end{theorem}
\begin{proof}
Denote the aggregation rule as $\aggr(\cdot)$, and $g(x^t) = \aggr(\{\tilde{v}_i^t: i \in [m]\})$, and $x^{t+1} = x^t - \gamma g(x^t)$.

Thus, we have 
\begin{align}
\label{equ:str_cvx_step_decomp}
\|x^{t+1} - x^*\| = \| x^t - \gamma g(x^t) - x^* \| \leq \| x^t - \gamma \nabla F(x^t) - x^* \| + \| \gamma \nabla F(x^t) - \gamma g(x^t) \|.
\end{align}

Furthermore, we have 
\begin{align*}
\| x^t - \gamma \nabla F(x^t) - x^* \|^2 = \|x^t - x^*\|^2 + \gamma^2 \|\nabla F(x^t)\|^2 - 2\gamma \ip{x^t - x^*}{\nabla F(x^t)}.
\end{align*}

Using the $\mu_F$-strong convexity, $L_F$-smoothness of $F(x)$, and Lemma~\ref{lem:bubeck_cvx} with $x = x^t$, $y = x^*$, we have
\begin{align*}
\ip{\nabla F(x^t)}{x^t-x^*} \geq \frac{\mu_F L_F}{\mu_F + L_F} \|x^t-x^*\|^2 + \frac{1}{\mu_F + L_F} \|\nabla F(x^t)\|^2.
\end{align*}
Taking $\gamma \leq \frac{2}{\mu_F + L_F}$, we obtain
\begin{align*}
&\| x^t - \gamma \nabla F(x^t) - x^* \|^2 \\
&\leq \|x^t - x^*\|^2 + \gamma^2 \|\nabla F(x^t)\|^2 - 2\gamma \left( \frac{\mu_F L_F}{\mu_F + L_F} \|x^t-x^*\|^2 + \frac{1}{\mu_F + L_F} \|\nabla F(x^t)\|^2 \right) \\
&\leq \left( 1- \frac{2\gamma \mu_F L_F}{\mu_F + L_F} \right) \|x^t-x^*\|^2 + \gamma \left( \gamma - \frac{2}{\mu_F + L_F} \right) \|\nabla F(x^t)\|^2 \\
&\leq \left( 1- \frac{2\gamma \mu_F L_F}{\mu_F + L_F} \right) \|x^t-x^*\|^2.
\end{align*}

Note that when $\gamma \leq \frac{2}{\mu_F + L_F}$, we have $\frac{2\gamma \mu_F L_F}{\mu_F + L_F} \leq \frac{4 \mu_F L_F}{(\mu_F + L_F)^2} \leq 1$. Using $\sqrt{1-a} \leq  1-\frac{a}{2}$, $\forall a \leq 1$, we obtain 
\begin{align*}
\| x^t - \gamma \nabla F(x^t) - x^* \| \leq \left( 1- \frac{\gamma \mu_F L_F}{\mu_F + L_F} \right) \|x^t-x^*\|.
\end{align*}

Combined with Equation~\ref{equ:str_cvx_step_decomp}, we obtain
\begin{align*}
\|x^{t+1} - x^*\| \leq \left( 1- \frac{\gamma \mu_F L_F}{\mu_F + L_F} \right) \|x^t-x^*\| + \gamma \| \nabla F(x^t) - g(x^t) \|.
\end{align*}

Conditional on $x^t$, taking the expectation on both sides, we obtain 
\begin{align*}
&\E\|x^{t+1} - x^*\| \\
&\leq \left( 1- \frac{\gamma \mu_F L_F}{\mu_F + L_F} \right) \|x^t-x^*\| + \gamma \E\| \nabla F(x^t) - g(x^t) \| \\
&\leq \left( 1- \frac{\gamma \mu_F L_F}{\mu_F + L_F} \right) \|x^t-x^*\| + \gamma \sqrt{\E\| \nabla F(x^t) - g(x^t) \|^2} \comment{Jenssen's inequality} \\
&\leq \left( 1- \frac{\gamma \mu_F L_F}{\mu_F + L_F} \right) \|x^t-x^*\| + \gamma \sqrt{\Delta}. \comment{$\Delta$-Byzantine resilience}
\end{align*}

By telescoping and taking total expectation, we obtain
\begin{align*}
&\E\|x^{T} - x^*\| \\
&\leq \left( 1- \frac{\gamma \mu_F L_F}{\mu_F + L_F} \right)^T \|x^0-x^*\| + \sum_{i=0}^{T-1} \left( 1- \frac{\gamma \mu_F L_F}{\mu_F + L_F} \right)^i \gamma \sqrt{\Delta} \\
&\leq \left( 1- \frac{\gamma \mu_F L_F}{\mu_F + L_F} \right)^T \|x^0-x^*\| + \sum_{i=0}^{+\infty} \left( 1- \frac{\gamma \mu_F L_F}{\mu_F + L_F} \right)^i \gamma \sqrt{\Delta} \\
&\leq \left( 1- \frac{\gamma \mu_F L_F}{\mu_F + L_F} \right)^T \|x^0-x^*\| + \frac{\mu_F + L_F}{\mu_F L_F} \gamma \sqrt{\Delta}. \comment{$\sum_{i=0}^{+\infty} a^i = \frac{1}{1-a}$, for $|a| < 1$}
\end{align*}
\end{proof}

\subsubsection{General functions}
For general non-strongly convex functions and non-convex functions, we provide the following convergence guarantee.
\begin{theorem}
Assume that $F(x)$ is $L_F$-smooth and potentially non-convex, where $0 < L_F$. We take $\gamma \leq \frac{1}{L_F}$. In any iteration $t$, the correct gradients are $v_i^t = \nabla F_i(x^t)$. Using any (classic or dimensional) $\Delta$-Byzantine-resilient aggregation rule with corresponding assumptions, we obtain linear convergence with a constant error after $T$ iterations with synchronous SGD:
\begin{align*}
\frac{\sum_{i=0}^{T-1} \E\|\nabla F(x^i)\|^2}{T} \leq \frac{2}{\gamma T} \left[ F(x^0) - F(x^*) \right] + \Delta.
\end{align*}
\end{theorem}
\begin{proof}
Denote the aggregation rule as $\aggr(\cdot)$, and $g(x^t) = \aggr(\{\tilde{v}_i^t: i \in [m]\})$, and $x^{t+1} = x^t - \gamma g(x^t)$.

To prove the convergence with constant error, we first bound the descendant of the loss value in each iteration.
\begin{align*}
&F(x^{t+1}) \leq F(x^t) + \ip{\nabla F(x^t)}{x^{t+1} - x^t} + \frac{L_F}{2} \|x^{t+1} - x^t\|^2 \comment{smoothness} \\
&= F(x^t) - \gamma \ip{\nabla F(x^t)}{g(x^t)} + \frac{L_F \gamma^2}{2} \|g(x^t)\|^2 \\
&\leq F(x^t) - \gamma \ip{\nabla F(x^t)}{g(x^t)} + \frac{\gamma}{2} \|g(x^t)\|^2 \comment{$\gamma \leq \frac{1}{L_F}$} \\
&= F(x^t) - \frac{\gamma}{2} \|\nabla F(x^t)\|^2 + \frac{\gamma}{2} \|\nabla F(x^t) - g(x^t)\|^2.
\end{align*}
Thus, we obtain
\begin{align*}
F(x^{t+1}) - F(x^*) \leq F(x^t) - F(x^*) - \frac{\gamma}{2} \|\nabla F(x^t)\|^2 + \frac{\gamma}{2} \|\nabla F(x^t) - g(x^t)\|^2.
\end{align*}
By telescoping and taking total expectation, we obtain
\begin{align*}
0 \leq \E\left[ F(x^{T}) - F(x^*) \right] \leq F(x^0) - F(x^*) - \frac{\gamma}{2} \sum_{i=0}^{T-1} \E\|\nabla F(x^i)\|^2 + \frac{\gamma}{2} T\Delta.
\end{align*}

By rearranging the terms, we obtain the desired result
\begin{align*}
\frac{\sum_{i=0}^{T-1} \E\|\nabla F(x^i)\|^2}{T} \leq \frac{2}{\gamma T} \left[ F(x^0) - F(x^*) \right] + \Delta.
\end{align*}
\end{proof}

\newpage
\clearpage
\subsection{Experimental details}
\label{sec:sup_exp}
In Table~\ref{tbl:mlp} and \ref{tbl:cnn}, we show the detailed network structures of the MLP and CNN used in our experiments.
\begin{table}[htb!]
\centering
\caption{MLP Summary}
\label{tbl:mlp}
\begin{tabular}{|l|l|l|}
\hline 
Layer (type) & Parameters & Previous Layer \\ \hline 
flatten(Flatten) & null & data \\ \hline
fc1(FullyConnected) & \#output=128 & flatten \\ \hline
relu1(Activation) & null & fc1 \\ \hline
fc2(FullyConnected) & \#output=128 & relu1 \\ \hline
relu2(Activation) & null & fc2 \\ \hline
fc3(FullyConnected) & \#output=10 & relu2 \\ \hline
softmax(SoftmaxOutput) & null & fc3 \\ \hline
\end{tabular} 
\end{table}
\begin{table}[htb!]
\centering
\caption{CNN Summary}
\label{tbl:cnn}
\begin{tabular}{|l|l|l|}
\hline 
Layer (type) & Parameters & Previous Layer \\ \hline 
conv1(Convolution)& channels=32, kernel\_size=3, padding=1 &data \\ \hline 
activation1(Activation)& null &conv1 \\ \hline 
conv2(Convolution)& channels=32, kernel\_size=3, padding=1 &activation1 \\ \hline 
activation2(Activation)& null &conv2 \\ \hline 
pooling1(Pooling)& pool\_size=2 &activation2 \\ \hline 
dropout1(Dropout)& probability=0.2 &pooling1 \\ \hline 
conv3(Convolution)& channels=64, kernel\_size=3, padding=1 &dropout1 \\ \hline 
activation2(Activation)& null &conv3 \\ \hline 
conv4(Convolution)& channels=64, kernel\_size=3, padding=1 &activation2 \\ \hline 
activation4(Activation)& null &conv4 \\ \hline 
pooling2(Pooling)& pool\_size=2 &activation4 \\ \hline 
dropout2(Dropout)& probability=0.2 &pooling2 \\ \hline 
flatten1(Flatten)& null &dropout2 \\ \hline 
fc1(FullyConnected)& \#output=512 &flatten1 \\ \hline 
activation5(Activation)& null &fc1 \\ \hline 
dropout3(Dropout)& probability=0.2 &activation5 \\ \hline 
fc2(FullyConnected)& \#output=512 &dropout3 \\ \hline 
activation6(Activation)& null &fc2 \\ \hline 
dropout4(Dropout)& probability=0.2 &activation6 \\ \hline 
fc3(FullyConnected)& \#output=10 &dropout4 \\ \hline 
softmax(SoftmaxOutput)& null &fc3 \\ \hline   
\end{tabular} 
\end{table}

\newpage
\clearpage
\subsection{Additional experiments}

In this section, we illustrate the additional empirical results. 

In Figure~\ref{fig:mnist_batchsize}, we illustrate the top-1 accuracy of MLP on MNIST when batch-size varies, without Byzantine failures. The learning rate is 
\begin{align*}
\gamma = \frac{0.1 \times batchsize}{32}.
\end{align*}
The results show that when there is no Byzantine failures, \textit{Phocas} performs just like \textit{Mean}. \textit{Trmean} has slightly slower convergence. \textit{Krum} is the slowest. The gap is narrowed when the batch size increases.

We illustrate all the experimental results of CNN on CIFAR10 additional to Section~\ref{sec:experiments}. For completeness, we also illustrate the experimental results of MLP on MNIST. The results are shown in Figure~\ref{fig:mnist_nobyz_appendix}-\ref{fig:cifar10_multiserver_appendix}. In general, all the experiments of CNN on CIFAR10 show similar results with the experiments of MLP on MNIST.

\begin{figure*}[htb!]
\centering
\subfigure[MLP on MNIST without Byzantine with different batch sizes]{\includegraphics[width=0.90\textwidth]{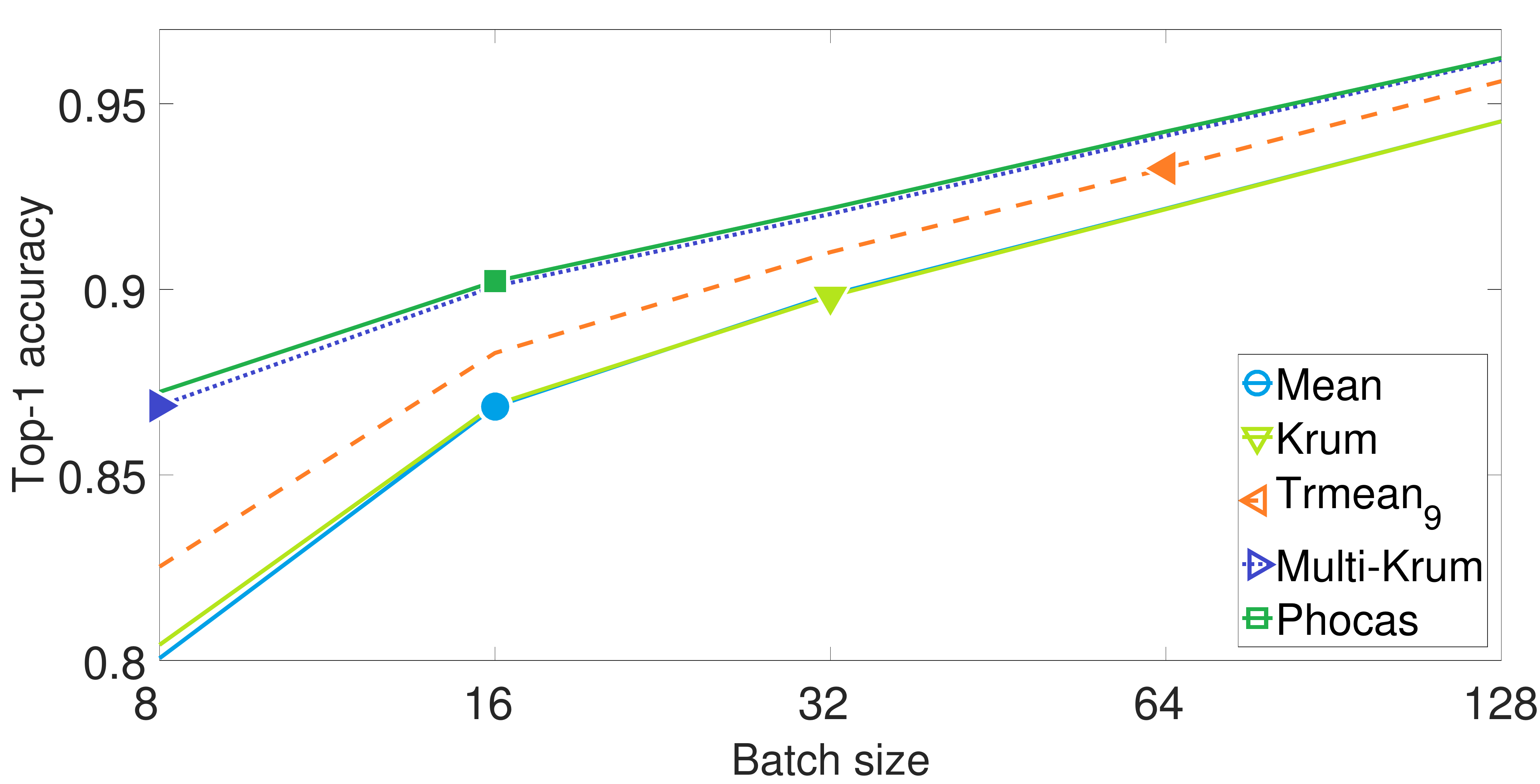}}
\caption{Top-1 accuracy of MLP on MNIST without Byzantine failures, when batch size varies. The learning rate is $\gamma = \frac{0.1 \times batchsize}{32}$.}
\label{fig:mnist_batchsize}
\end{figure*}

\begin{figure*}[htb!]
\centering
\subfigure[MLP on MNIST without Byzantine]{\includegraphics[width=0.90\textwidth]{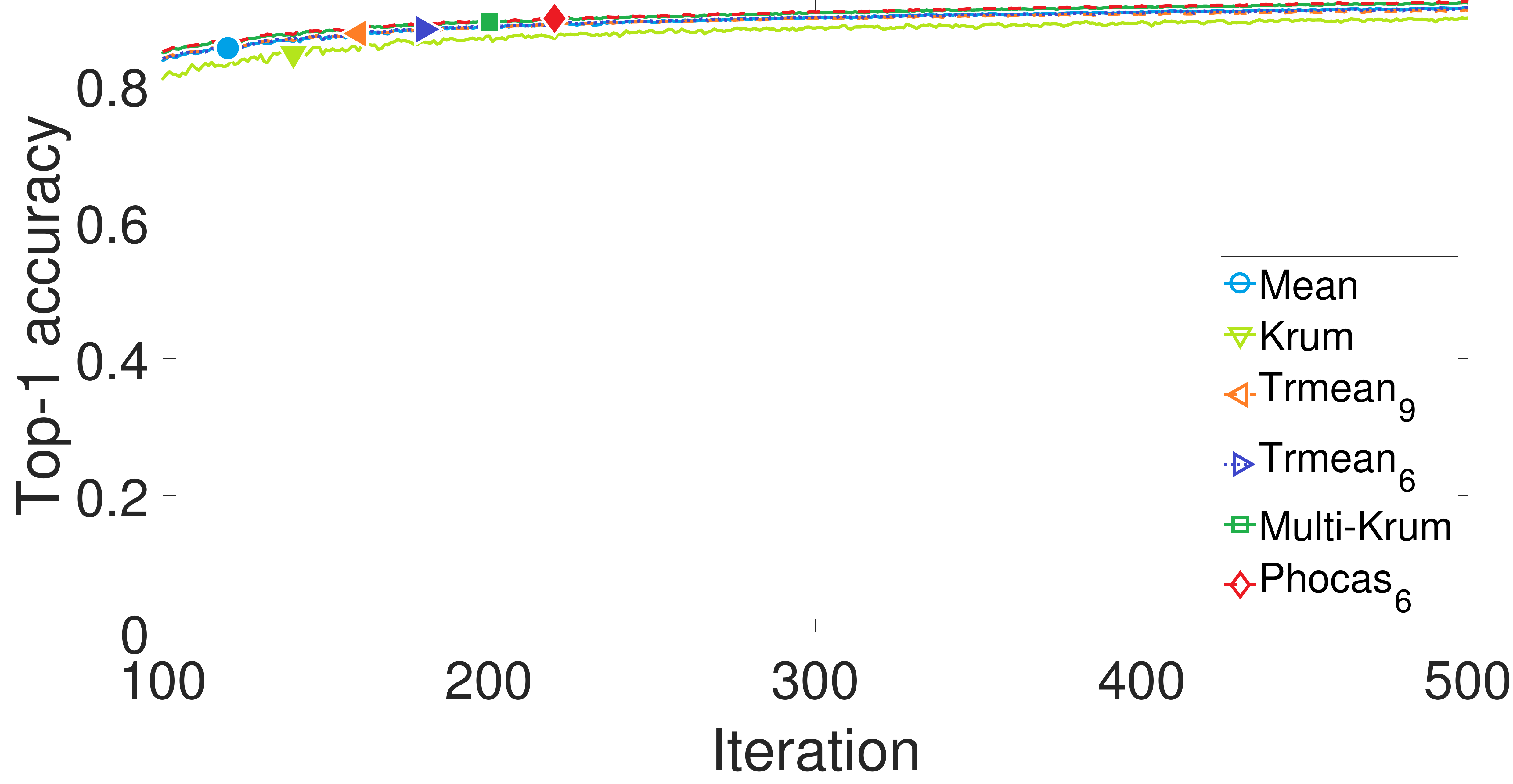}}
\subfigure[MLP on MNIST without Byzantine~(zoomed)]{\includegraphics[width=0.90\textwidth]{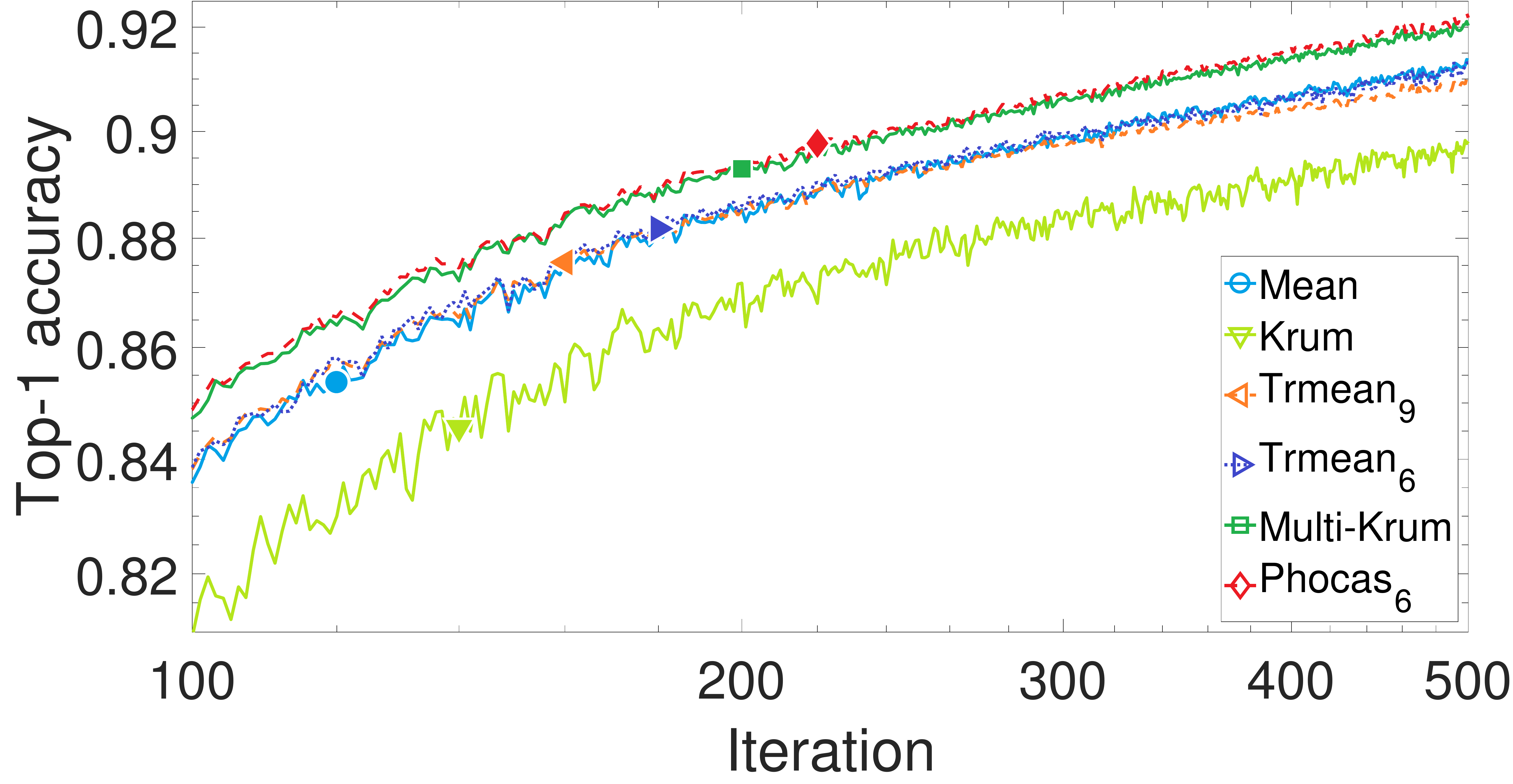}}
\caption{Top-1 accuracy of MLP on MNIST without Byzantine failures.}
\label{fig:mnist_nobyz_appendix}
\end{figure*}
\begin{figure*}[htb!]
\centering
\subfigure[MLP on MNIST with Gaussian]{\includegraphics[width=0.90\textwidth]{mnist_gaussian_small}}
\subfigure[MLP on MNIST with Gaussian~(zoomed)]{\includegraphics[width=0.90\textwidth]{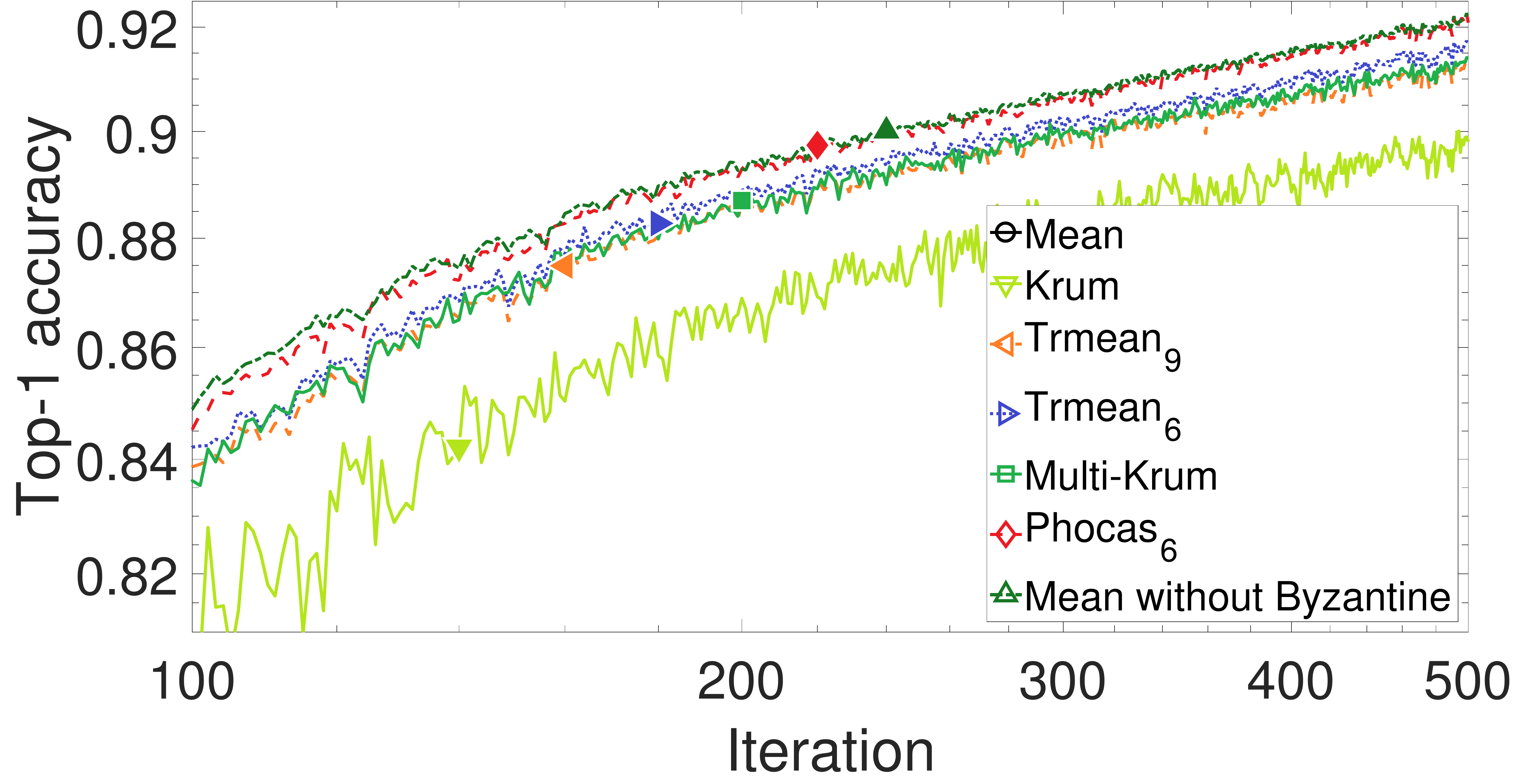}}
\caption{Top-1 accuracy of MLP on MNIST with Gaussian attack. 6 out of 20 gradient vectors are replaced by i.i.d. random vectors drawn from a Gaussian distribution with 0 mean and 200 standard deviation.}
\label{fig:mnist_gaussian_appendix}
\end{figure*}
\begin{figure*}[htb!]
\centering
\subfigure[MLP on MNIST with omniscient]{\includegraphics[width=0.90\textwidth]{mnist_omniscient_small}}
\subfigure[MLP on MNIST with omniscient~(zoomed)]{\includegraphics[width=0.90\textwidth]{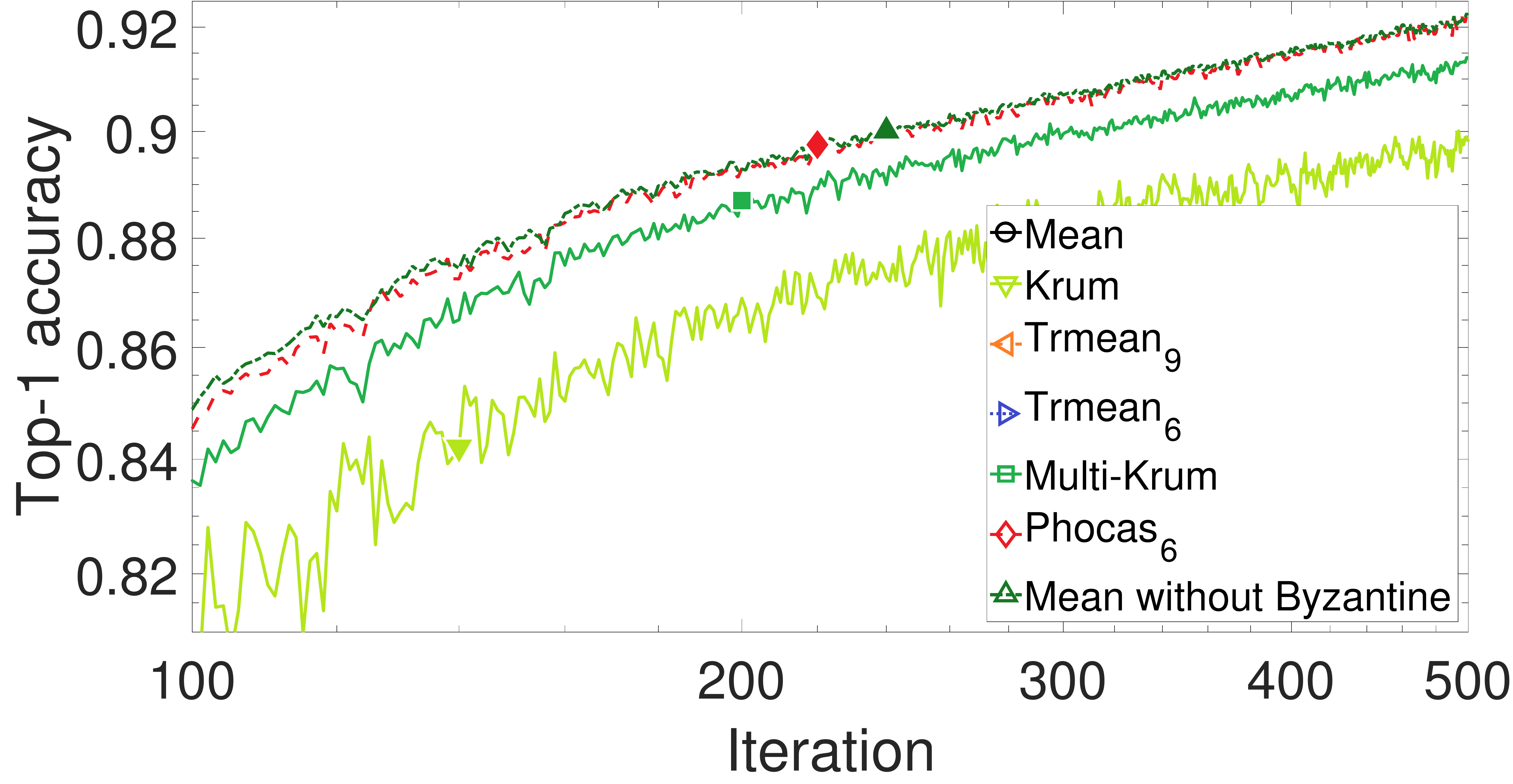}}
\caption{Top-1 accuracy of MLP on MNIST with omniscient attack. 6 out of 20 gradient vectors are replaced by the negative sum of all the correct gradients, scaled by a large constant~(1e20 in the experiments). }
\label{fig:mnist_omniscient_appendix}
\end{figure*}
\begin{figure*}[htb!]
\centering
\subfigure[MLP on MNIST with bit-flip]{\includegraphics[width=0.90\textwidth]{mnist_bitflip_small}}
\subfigure[MLP on MNIST with bit-flip~(zoomed)]{\includegraphics[width=0.90\textwidth]{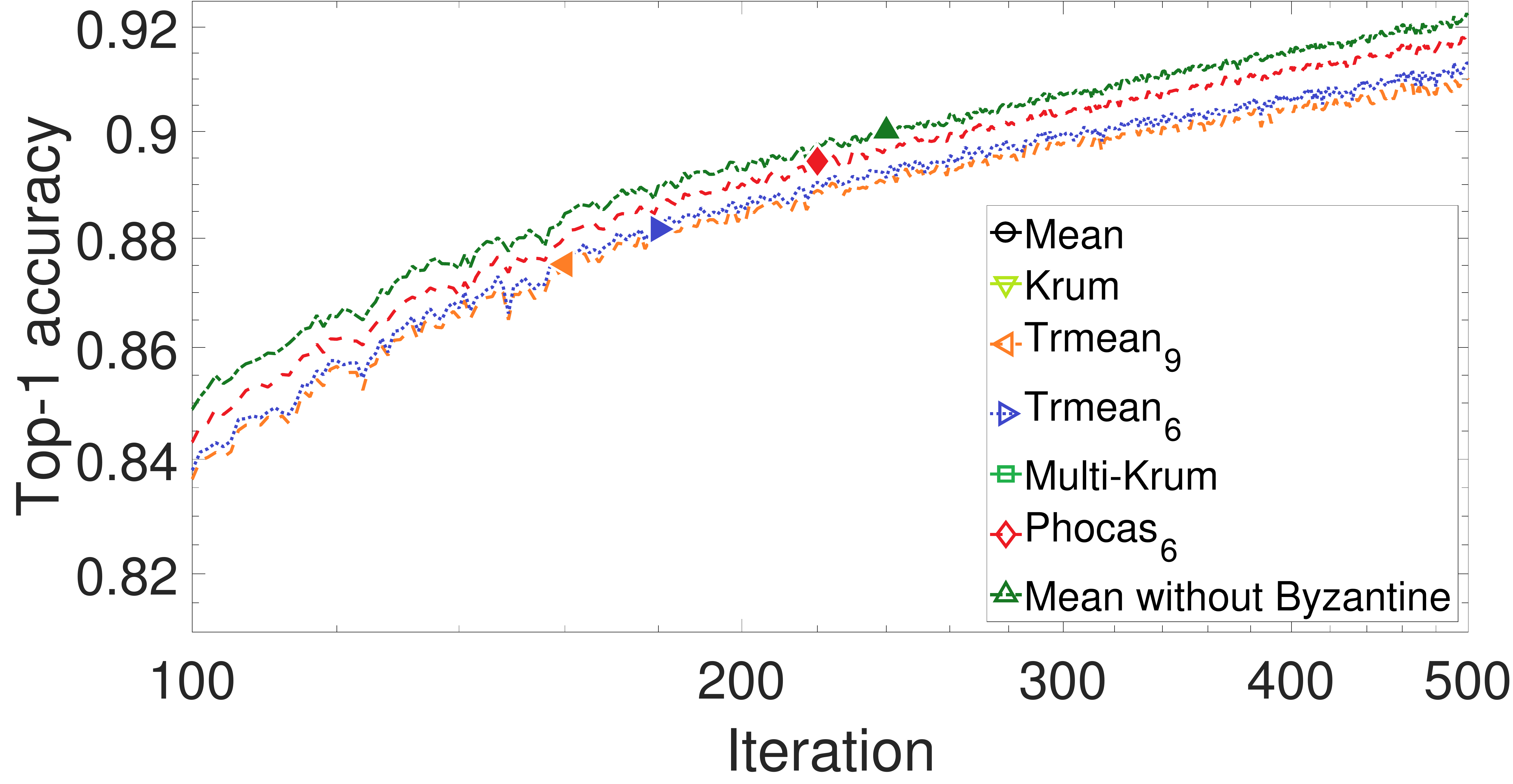}}
\caption{Top-1 accuracy of MLP on MNIST with bit-flip attack. For the first 1000 dimensions, 1 of the 20 floating numbers is manipulated by flipping the 22th, 30th, 31th and 32th bits.}
\label{fig:mnist_bitflip_appendix}
\end{figure*}
\begin{figure*}[htb!]
\centering
\subfigure[MLP on MNIST with gambler]{\includegraphics[width=0.90\textwidth]{mnist_multiserver_small}}
\subfigure[MLP on MNIST with gambler~(zoomed)]{\includegraphics[width=0.90\textwidth]{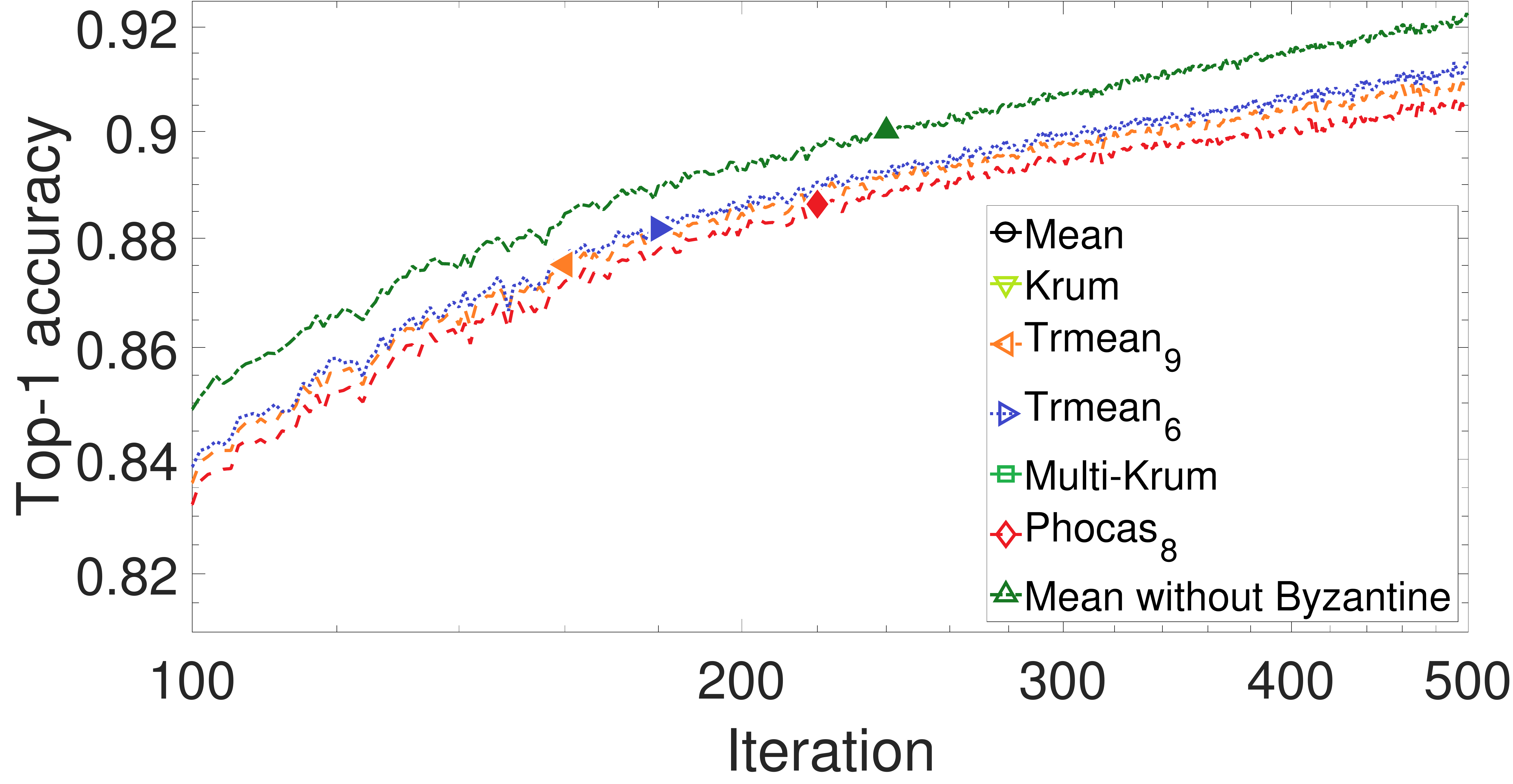}}
\caption{Top-1 accuracy of MLP on MNIST with gambler attack. The parameters are evenly assigned to 20 servers. For one single server, any received value is multiplied by $-1e20$ with probability 0.05\%.}
\label{fig:mnist_multiserver_appendix}
\end{figure*}

\begin{figure*}[htb]
\centering
\includegraphics[width=0.90\textwidth]{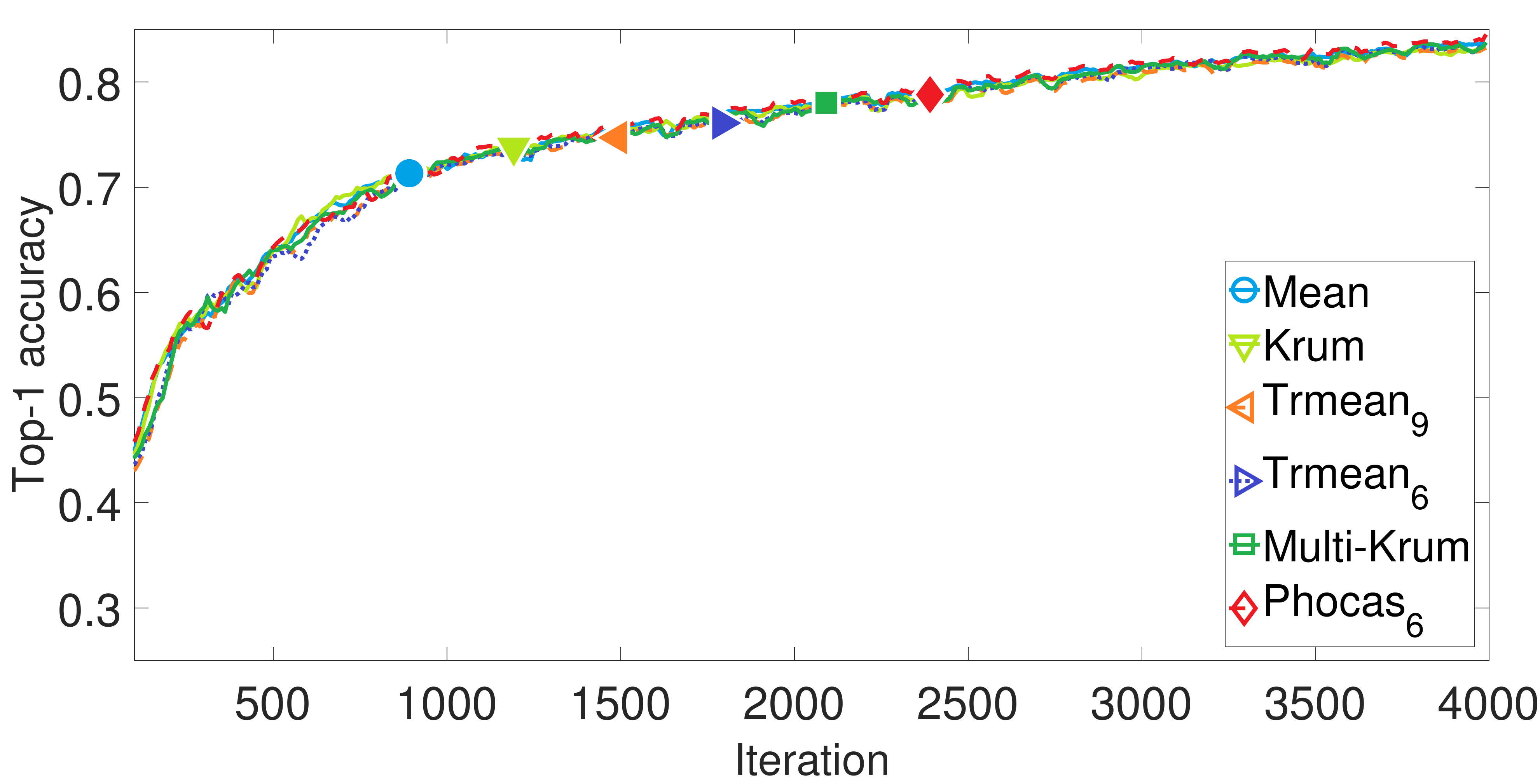}
\caption{Top-3 Accuracy of CNN VS. \# rounds evaluated on CIFAR10 without Byzantine failures}
\label{fig:cifar10_nobyz_appendix}
\end{figure*}
\begin{figure*}[htb]
\centering
\includegraphics[width=0.90\textwidth]{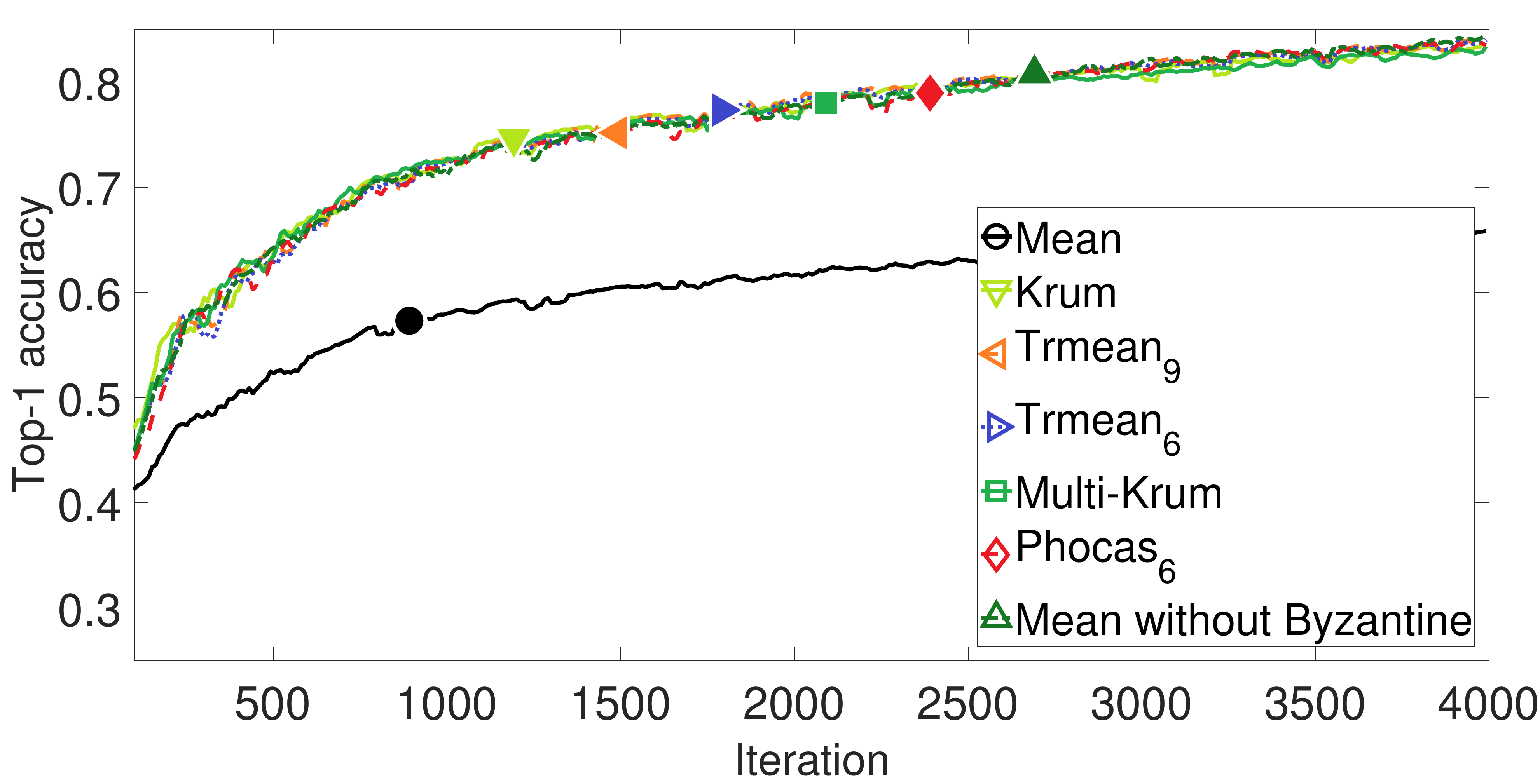}
\caption{Top-3 Accuracy of CNN VS. \# rounds evaluated on CIFAR10 with Gaussian attack}
\label{fig:cifar10_gaussian_appendix}
\end{figure*}
\begin{figure*}[htb]
\centering
\includegraphics[width=0.90\textwidth]{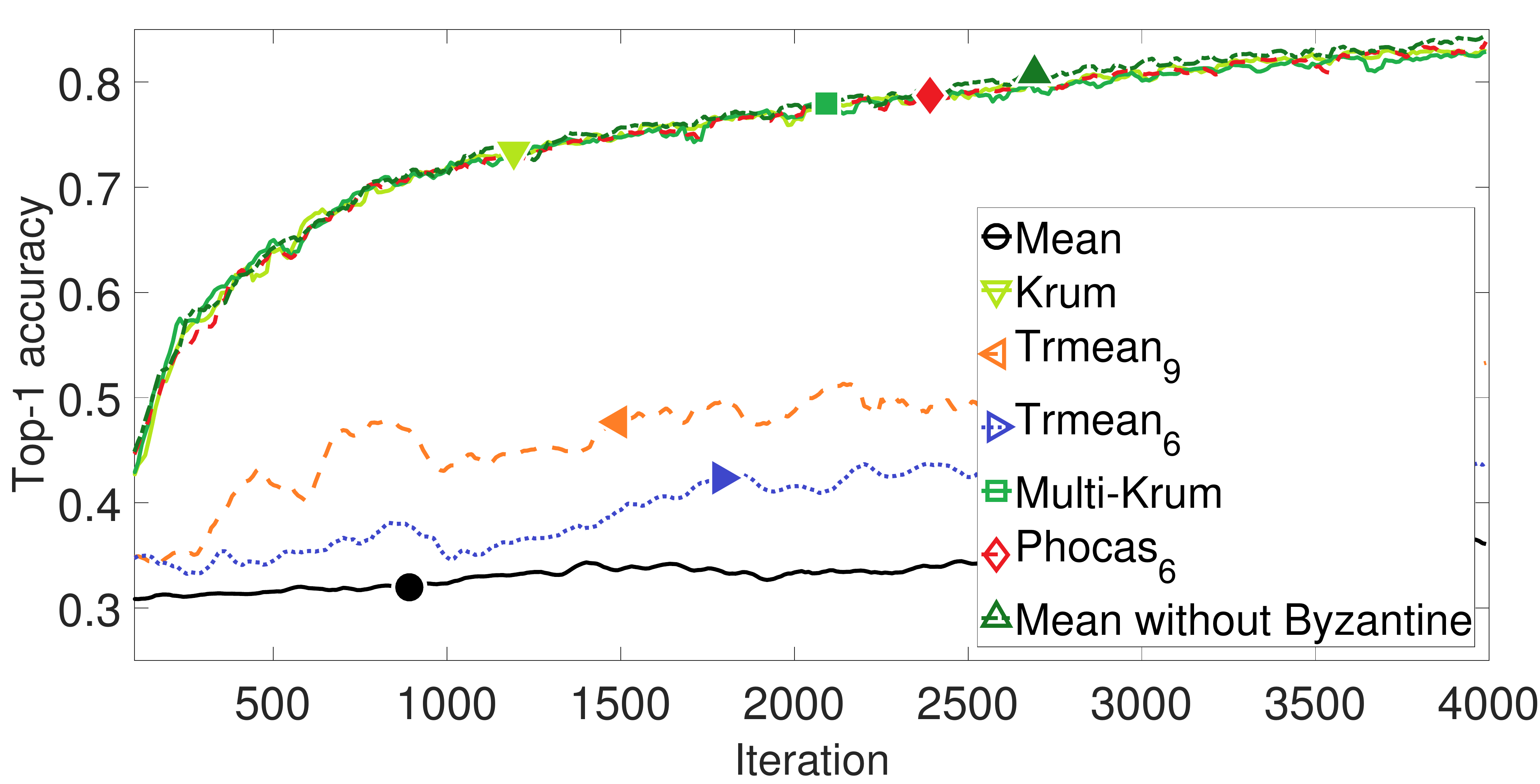}
\caption{Top-3 Accuracy of CNN VS. \# rounds evaluated on CIFAR10 with omniscient attack}
\label{fig:cifar10_omniscient_appendix}
\end{figure*}
\begin{figure*}[htb]
\centering
\includegraphics[width=0.90\textwidth]{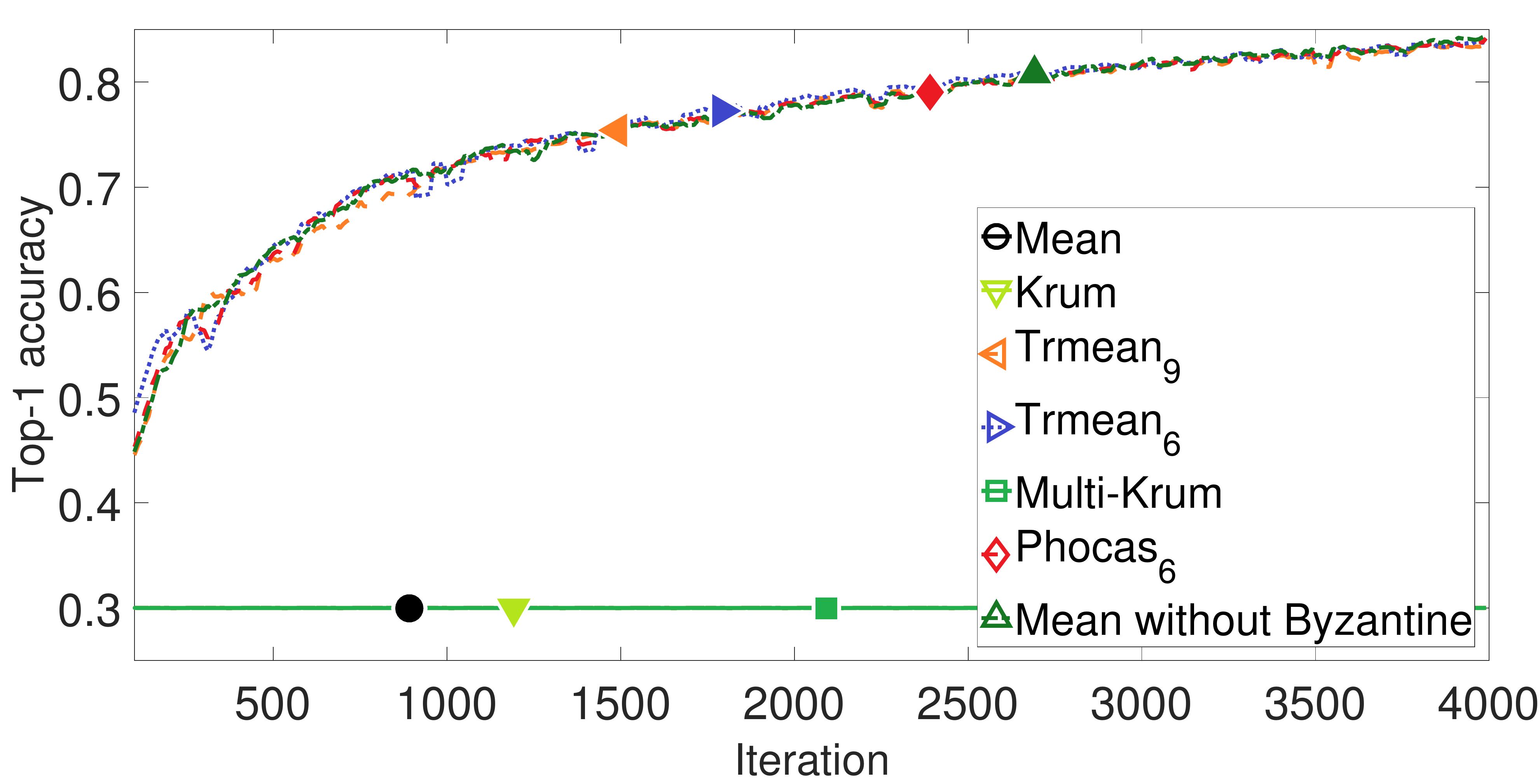}
\caption{Top-3 Accuracy of CNN VS. \# rounds evaluated on CIFAR10 with bit-flip attack}
\label{fig:cifar10_bitflip_appendix}
\end{figure*}
\begin{figure*}[htb]
\centering
\includegraphics[width=0.90\textwidth]{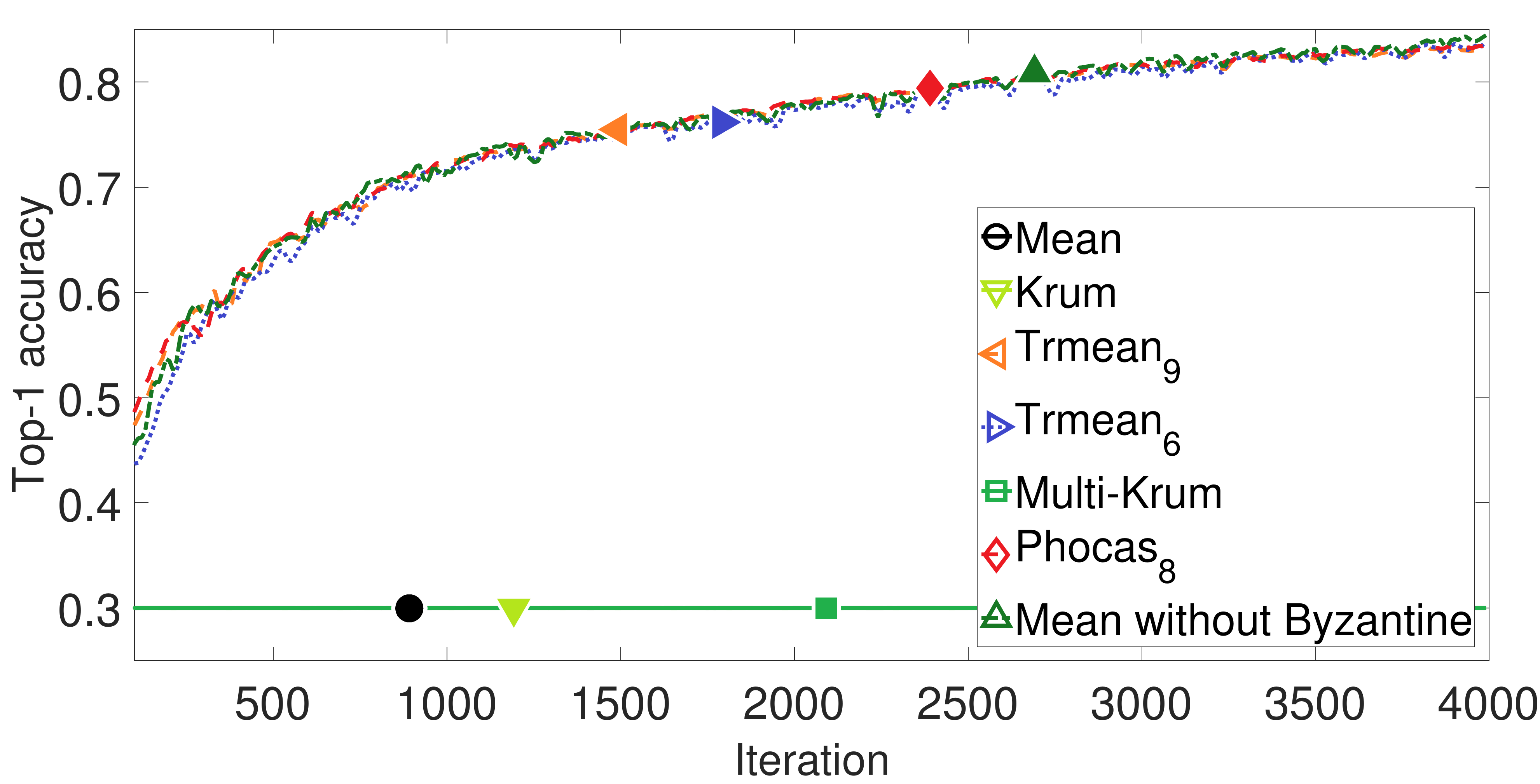}
\caption{Top-3 Accuracy of CNN VS. \# rounds evaluated on CIFAR10 with gambler attack.}
\label{fig:cifar10_multiserver_appendix}
\vspace{-0.5cm}
\end{figure*}

\end{document}